\documentclass[12pt]{iopart}
\usepackage{iopams}

\newenvironment{pmatrixTWO}
  {\left(\begin{array}{cc}} 
  {\end{array}\right)}

\usepackage[pdftex]{graphicx}
\usepackage{hyperref}
\usepackage{overpic}
\usepackage{diagbox}
\usepackage{wasysym}
\usepackage{float}
\usepackage{comment}
\usepackage[section]{placeins}
\usepackage{subcaption}
\usepackage{bm}

\newtheorem{defin}{Definition}

\newtheorem{theorem}{Theorem}

\newtheorem{proposition}{Proposition}
\begin{document}

\title[Unique determination and generation of rank two single-qubit quantum channels]{Unique determination and generation of rank two single-qubit quantum channels\footnote{This manuscript is the author submitted version. The Version of Record has a significantly polished numerical part and is published as:
N.E.~Kuklina, V.Iu.~Vasilenko, B.O. Volkov, and A.N. Pechen, Unique determination and generation of rank-two single-qubit quantum channels, Journal of Physics A: Mathematical and Theoretical, 59, 355309 (2026).
https://doi.org/10.1088/1751-8121/ae9b4f}}

\author{Nina E. Kuklina$^{1}$, Vasilii Iu. Vasilenko$^{1}$, Boris O. Volkov$^{1}$,  and Alexander N. Pechen$^{1,2}$}
\address{$^1$ Department of Mathematical Methods for Quantum Technologies, Steklov Mathematical Institute of Russian Academy of Sciences, Gubkina str. 8, 119333, Moscow, Russia}
\date{}

\begin{abstract} 
An important problem in quantum technologies is the generation of a target evolution of an open quantum system. A core component of this task is the ability to determine whether the actual evolution of the system coincides with the target evolution. For the generation of unitary quantum channels in open quantum systems, it was shown by Goerz, Reich and Koch [New J. Phys. {\bf 16} 055012 (2014)] that for determining whether the actual evolution coincides with a desired unitary it is sufficient to compare their action on three special density matrices. In this work, we consider controlled generation and unique determination of non-unitary quantum channels in open quantum systems with particular emphasis on rank two single-qubit quantum channels. We prove that to uniquely determine such quantum channels it is sufficient to consider their action on only three suitably chosen density matrices. Based on this theoretical result, we numerically investigate generation of various non-unitary target single-qubit quantum channels in open quantum systems using coherent and incoherent controls.
\end{abstract}

\noindent{\it Keywords\/}: quantum control, quantum channel, open quantum system, incoherent control, single qubit

\section{Introduction}  

The ability to accurately control quantum systems is necessary for diverse directions in quantum technologies~\cite{KochEPJQuantumTechnol2022,TannorBook2007,ShapiroBrumerBook2012,AlessandroBook2021,Ma_Qi_Petersen_Wu_Rabitz_Dong_2025}. In many applications, controlled quantum systems are open, i.e. interacting with a surrounding environment. This circumstance motivates the importance of developing methods for controlling open quantum systems. The most general evolution of an open quantum system is described by a quantum channel, i.e., completely positive trace-preserving (CPTP) linear map, also called a Kraus map~\cite{Kraus1983, Breuer_Book_1993, Nielsen_Chuang_2010, Wilde_Book_2017, Holevo_Book_2019} (non-completely positive maps are also considered in some works~\cite{Pechukas_1994,Shaji_Sudarshan_2005}). 

Open quantum system dynamics with non-unitary evolution has been explored across a wide range of quantum control and information-processing settings. Its relevance was established for quantum computing with mixed states and non-unitary quantum gates~\cite{Aharonov1998,Tarasov_2002}. Measurement-assisted quantum control based on the Strocchi map was studied~\cite{VilelaMendes2003}. The role of the environment as a resource was explored through incoherent control schemes~\cite{Pechen2006a}, and later was applied for preparation of many-body entangled states via engineered dissipation~\cite{Diehl2008} and to dissipation-driven approaches for improving quantum computation~\cite{Verstraete2009}. Non-unitary all-to-one, or replacement, quantum channels were proposed for robust quantum control~\cite{Wu_Pechen_Brif_Rabitz_2007}. Experimental progress demonstrated the induction of multiparticle entanglement dynamics through controlled decoherence~\cite{Barreiro2010}, theoretical advances on using pointer states stabilized by engineered environments~\cite{Khodjasteh2011}, optimal control strategies exploiting cooperative effects of driving and dissipation~\cite{Schmidt2011}, deterministic entanglement generation in Rydberg ensembles~\cite{Rao2014} and incoherent control of molecular processes such as retinal isomerization~\cite{Lucas2014}, as well as precise manipulation of qubits via measurement backaction and feedback~\cite{Blok2014}. More recently, open-system control techniques have been applied to tailoring kinetic processes in complex materials~\cite{Laforge2018} and to implementing experimentally accessible Kraus-map-based replacement (all-to-one) transformations such as polarization control~\cite{Zhang2022}. 

Characterization of properties of general non-unitary quantum channels plays an important role in quantum information theory and its applications, including quantum tomography, information transfer, and quantum control.  An important problem is to find a minimal number of quantum states (density matrices) which allow to uniquely determine whether an actual evolution coincides with some quantum channel from a given set of quantum channels (e.g., with a unitary quantum channel, rank two quantum channel, etc.) or not. That is, for a quantum channel $\Phi^*$ from the given set of quantum channels and for an arbitrary quantum channel $\Phi$, how many different density matrices $\rho_1,\dots,\rho_k$ can be used to claim that $\Phi=\Phi^*$ if $\Phi(\rho_1)=\Phi^*(\rho_1),\dots, \Phi(\rho_k)=\Phi^*(\rho_k)$? If $\Phi^*$ is a unitary quantum channel, then as was shown in~\cite{Goerz_NJP_2014_2021,Goerz_2021}, for an $N$-level quantum system it is sufficient to set $k=3$ and choose some special density matrices $\rho_1,\rho_2,\rho_3$. Based on these results, generation of single-qubit and two-qubit unitary channels in open quantum systems using coherent and incoherent drives was investigated~\cite{PetruhanovPhotonics2023,PechenPetruhanovMorzhin}. 

A natural question is to consider other that unitary classes of quantum channels. In general it turns out to be a nontrivial problem. In this work, we show that a general quantum channel $\Phi: \mathbb{C}^{n \times n} \to \mathbb{C}^{m \times m}$ (with $m>1$) cannot be uniquely determined by fewer than $n^2$ input states, provided its Choi matrix has full rank $nm$. Further, we prove that for every rank-two single qubit quantum channel there exists a set of three states which uniquely determines it, so that in this case $k=3$. Rank two quantum channels can be considered as simplest beyond unitary which are rank one. Finding the minimal number of states necessary to uniquely determine a quantum channel is important for construction of a proper objective functional which can be used to generate target non-unitary processes in numerical setups. It is even more important in experimental setups to minimize cost of overall quantum channel tomography, which is cost of tomography of one quantum state multiplied by the number of quantum states.

Based on this theoretical result, we investigate generation of various non-unitary target single-qubit quantum channels, such as replacement and phase damping channels, using coherent and incoherent controls of an open quantum system and incoherent GRAPE (inGRAPE) method~\cite{PetruhanovPechenJPA2023}. As one may expect, not all target quantum channels can be generated with high precision, since for the problem of generation of non-unitary quantum channels even such basic problem as controllability criteria for a particular physical model is generally not studied. Although controllability in the sets of density matrices for open quantum systems was studied in some cases~\cite{Altafini1,Altafini2,Wu_Pechen_Brif_Rabitz_2007} and the explicit structure of the reachable sets for an open qubit was established~\cite{Lokutsievskiy_Pechen_2021}, controllability on the set of quantum channels still requires its investigation.

The paper is organized as follows. In Sec.~\ref{Unique_determination_Sec}, we introduce a notion of a set of states that uniquely determines a quantum channel and prove that a full-rank channel cannot be uniquely determined by a set of states that is smaller than the square of the input space dimension (Theorem~1). Sec.~\ref{Three_states_Sec} contains proof of the main theorem that any single-qubit rank two quantum channel can be uniquely determined by a set of three quantum states (Theorem~2). In Sec.~\ref{Problem_of_generating_Sec}, we formulate the problem of generating a (generally non-unitary) single-qubit quantum channel using coherent and incoherent controls. In Sec.~\ref{sec_inGRAPE}, the gradient-based inGRAPE method for numerical generation of general quantum channels is described. Sec.~\ref{Sec_numerical} provides the obtained numerical results for generation of replacement channels and a phase damping channel. In~\ref{Parametrization}, we describe the parametrization of the dynamical equations and objective functionals. \ref{AppendixGradient} contains analytical formulas for the gradients of the objective functionals used for the numerical generation of quantum channels. Conclusions and Discussion Sec.~\ref{Conclusions} summarizes this work.

\section{Unique determination of quantum channels by their action on sets of states}
\label{Unique_determination_Sec}

A linear map $\Phi: \mathbb{C}^{n \times n} \rightarrow \mathbb{C}^{m \times m}$ is called a \textbf{quantum channel} if it is trace-preserving and completely positive. That is, for any $A\in \mathbb{C}^{n \times n}$ one has  $\Tr(\Phi(A))= \Tr(A)$, and for every $k\in \mathbb{N}$ the map $\Phi \otimes \mathbb{I}_{k}\colon \mathbb{C}^{n \times n}\otimes \mathbb{C}^{k \times k}\to
\mathbb{C}^{m \times m}\otimes \mathbb{C}^{k \times k}$ transforms positive-semidefinite operators to positive-semidefinite operators. Here $\mathbb{I}_{k}: \mathbb{C}^{k \times k} \rightarrow \mathbb{C}^{k \times k}$ denotes the identity operator. We denote the set of quantum channels from $\mathbb{C}^{n \times n}$ to $\mathbb{C}^{m \times m}$ by $\mathcal{C}han(n,m)$. 
Equivalently, by the Kraus representation theorem, a linear map $\Phi:\mathbb{C}^{n \times n} \rightarrow \mathbb{C}^{m \times m}$ is a quantum channel \cite{Kraus1983} if and only if there exist operators $\{K_j\}_{j=1}^{l} \subset \mathbb{C}^{m\times n}$ such that for every density matrix $\rho$
\begin{equation}\label{kraus}
    \Phi(\rho) = \sum\limits_{j=1}^{l}K_j\rho K_j^\dagger, \quad \sum\limits_{j=1}^{l}K_j^\dagger K_j = \mathbb{I}_n.
\end{equation}
This representation of a given quantum channel is non-unique, since any set of Kraus operators $\tilde K_i=\sum\limits_j U_{ij}K_j$ with $U_{ij}$ being matrix elements of some unitary matrix $U$ determines the same quantum channel. The minimal possible number of operators $l$ in the decomposition (\ref{kraus}) is called the Kraus rank of the channel $\Phi$. 
 
Let $\Phi:\mathbb{C}^{n \times n} \rightarrow \mathbb{C}^{m \times m} $  be a linear map and let  $E_{ij}$ be the $n \times n$ matrix with 1 in the $(i,j)$-th entry and 0 elsewhere. Then $\mathcal{C}_{\Phi} = \sum\limits_{i, j=1}^{n} E_{ij} \otimes \Phi(E_{ij})$ is called the Choi matrix of the map $\Phi$.  The map $\Phi$ is completely positive if and only if $\mathcal{C}_\Phi$ is positive semidefinite \cite{choi75}. The rank of the Choi matrix of a quantum channel is equal to its Kraus rank. Another way to treat quantum channels is to consider them as linear operators between spaces of Hermitian matrices of the appropriate dimensions. Fix orthonormal bases for the spaces of Hermitian $n \times n$ and $m \times m$ matrices, consisting of the identity operator together with a basis of zero-trace Hermitian matrices, for instance, the generalized Gell-Mann matrices (see, e.g.,~\cite{bloch}). With respect to these bases, any quantum channel $\Phi\in \mathcal{C}han(n,m)$ is represented by a real matrix
\begin{equation}\label{matrix}
    \mathbb{T}_\Phi  = \left(\begin{array}{cc}
1 & \textbf{0}_{1\times3} \\
\mathbf{s} & B
\end{array}\right).
\end{equation}
Here $B \in \mathbb{R}^{(m^2-1)\times(n^2-1)}$, % $\textbf{0}$ is the zero row vector $1 \times (n^2-1)$,
$\textbf{s}$ is the column vector $(m^2-1) \times 1$. 
We now introduce the main definition.

\begin{defin}
    Let $\Phi\in \mathcal{C}han(n,m)$ be a quantum channel and let $ \{\rho_j\}_{j=1}^d$ be a set of $n\times n$ density matrices. If for every quantum channel $\Psi \in \mathcal{C}han(n,m)$ satisfying $\Psi(\rho_j)=\Phi(\rho_j)$ for all $j\in\{1,\ldots,d\}$, it follows that $\Psi=\Phi$, then we  say that the quantum channel $\Phi$ is {\bf uniquely determined} by the set of states $\{\rho_j\}_{j=1}^d$.     
\end{defin}
Any quantum channel $\Phi \in \mathcal{C}han(n,m)$ can be uniquely determined by a set of $n^2$ linearly independent quantum states. The ability to uniquely determine a channel with fewer states allows us to construct objective functionals for quantum control problems on a reduced set of states.

We first recall some previously known results.
In~\cite{Minnumber, Goerz_NJP_2014_2021} it was shown that a unitary quantum channel acting on an arbitrary finite-dimensional space can be uniquely determined by only three quantum states. The replacement channel provides an example of a channel that can be uniquely determined by even fewer states. Consider the replacement channel that maps every input state to a fixed pure state $\rho$. That is, for any density matrix $\omega$, the following holds:
\begin{equation*}
    \Phi_\rho(\omega) := \rho.
\end{equation*}
Indeed, the only quantum channel that maps a state whose spectrum does not contain zero to a pure state $\rho$ is the replacement channel $\Phi_\rho$ \cite{Kohout2010}. Therefore, if $\rho$ is pure, then $\Phi_\rho$ is uniquely determined by a single state. 

Denote by $\mathcal{L}(\mathbb{C}^{n \times n}, \mathbb{C}^{m \times m} )$ the space of linear maps from $n \times n$ to $m \times m$ complex matrices and by $\mathcal{L}_{\mathrm{TP}}(\mathbb{C}^{n \times n}, \mathbb{C}^{m \times m})$ the affine subspace of $\mathcal{L}(\mathbb{C}^{n \times n}, \mathbb{C}^{m \times m} )$ consisting of trace-preserving maps. We will use three equivalent representations of quantum channels: as abstract maps (by $\Psi \in   \mathcal{L}(\mathbb{C}^{n \times n}, \mathbb{C}^{m \times m} )$), as matrices of linear maps between spaces of Hermitian matrices (by $\mathbb{T}_\Psi \in \mathbb{R}^{m^2 \times n^2}$), and as Choi matrices (by $\mathcal{C}_\Psi \in \mathbb{C}^{mn\times mn}$). Let $\Phi \in \mathcal{C}han(n,m)$ be a quantum channel and let $\mathcal{S}$ be a set of quantum states. Denote
 \begin{equation*}
    \mathcal{P}(\Phi,\mathcal{S}):=\{\Psi \in \mathcal{L}_{\mathrm{TP}}(\mathbb{C}^{n \times n}, \mathbb{C}^{m \times m}) \mid \Psi(\rho)=\Phi(\rho) \mbox{ for every } \rho \in \mathcal{S}\}.
\end{equation*}
In the $\mathbb{T}$-representation, this becomes
\begin{equation*}
    \mathcal{P}_{\mathbb{T}}(\Phi,\mathcal{S}):=\{\mathbb{T}_\Psi |\Psi \in \mathcal{P}(\Phi,\mathcal{S})\}.
\end{equation*}
Under the Choi isomorphism, this corresponds to
\begin{equation*}
    \mathcal{P}_{\mathcal{C}}(\Phi,\mathcal{S}):=\{ \mathcal{C}_\Psi | \Psi \in \mathcal{P}(\Phi,\mathcal{S}) \}.
\end{equation*}
\begin{theorem}
    A quantum channel $\Phi \in \mathcal{C}han(n,m)$ with $m >1$ and with $\rank \Phi = nm$, cannot be uniquely determined by less than $n^2$ states.
\end{theorem}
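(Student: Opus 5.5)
The argument is a perturbation argument in the Choi representation. Let $\mathcal{S}=\{\rho_1,\dots,\rho_d\}$ with $d<n^2$. The goal is to show that $\mathcal{P}(\Phi,\mathcal{S})$ contains a quantum channel $\Psi\neq\Phi$. The set $\mathcal{P}(\Phi,\mathcal{S})$ is an affine subspace of $\mathcal{L}_{\mathrm{TP}}(\mathbb{C}^{n\times n},\mathbb{C}^{m\times m})$ through $\Phi$. Its direction space is
\[
V=\{\Delta \in \mathcal{L}(\mathbb{C}^{n\times n},\mathbb{C}^{m\times m}) \mid \Tr\Delta(A)=0 \ \forall A,\ \Delta(\rho_j)=0,\ j=1,\dots,d\}.
\]
We restrict to Hermiticity-preserving $\Delta$, i.e.\ $\Delta(A^\dagger)=\Delta(A)^\dagger$, so that $\mathcal{C}_\Delta$ is Hermitian.

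\textbf{Step 1: $V$ is nonzero.} Work in the $\mathbb{T}$-representation (\ref{matrix}). A Hermiticity-preserving trace-annihilating $\Delta$ corresponds to a real $m^2\times n^2$ matrix whose first row is zero. Such matrices form a real space of dimension $(m^2-1)n^2$. Each condition $\Delta(\rho_j)=0$ imposes at most $m^2-1$ real linear constraints, since trace zero of $\Delta(\rho_j)$ is already automatic. Hence
\[
\dim V\ \ge\ (m^2-1)(n^2-d)\ \ge\ m^2-1\ >\ 0 .
\]
This is where both hypotheses enter: $m>1$ and $d<n^2$. A cleaner way to see the same thing is to pick a Hermitian $X\neq 0$ orthogonal (in the Hilbert--Schmidt sense) to all $\rho_j$, which exists because the real span of the $\rho_j$ has dimension at most $d<n^2$. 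Pick also a traceless Hermitian $Y\neq0$ in $\mathbb{C}^{m\times m}$, which exists because $m>1$. Then set $\Delta(A)=\Tr(XA)\,Y$. This $\Delta$ is Hermiticity preserving, satisfies $\Tr\Delta(A)=0$ and $\Delta(\rho_j)=0$, and is nonzero.

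\textbf{Step 2: perturb while keeping complete positivity.} Put $\Psi_\varepsilon=\Phi+\varepsilon\Delta$ for real $\varepsilon$. Then $\Psi_\varepsilon$ is trace preserving and agrees with $\Phi$ on $\mathcal{S}$. Its Choi matrix is $\mathcal{C}_{\Phi}+\varepsilon\,\mathcal{C}_\Delta$, where
\[
\mathcal{C}_\Delta=\sum_{i,j}E_{ij}\otimes\Tr(XE_{ij})\,Y=X^{T}\otimes Y,
\]
a Hermitian matrix. The hypothesis $\rank\Phi=nm$ means $\mathcal{C}_\Phi$ is positive definite, so its smallest eigenvalue $\lambda_{\min}$ is strictly positive. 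By Weyl's inequality, $\mathcal{C}_{\Phi}+\varepsilon\,\mathcal{C}_\Delta$ stays positive definite whenever $|\varepsilon|\,\|\mathcal{C}_\Delta\|<\lambda_{\min}$. By Choi's theorem, $\Psi_\varepsilon$ is then completely positive, hence a quantum channel. Since $\Delta\neq0$ and $\varepsilon\neq0$, we get $\Psi_\varepsilon\neq\Phi$, so $\Phi$ is not uniquely determined by $\mathcal{S}$.

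\textbf{Main obstacle.} The only delicate point is Step 1: making sure the perturbation direction is simultaneously Hermiticity preserving, trace annihilating, zero on every $\rho_j$, and nonzero. The explicit rank-one construction $\Delta=\Tr(X\,\cdot)\,Y$ settles all four conditions at once. It also shows exactly why $m>1$ is needed: for $m=1$ there is no nonzero traceless $Y$, and indeed the unique channel into $\mathbb{C}^{1\times1}$ is the trace. Full rank is used only to obtain an open neighborhood of $\Phi$ inside the set of channels.
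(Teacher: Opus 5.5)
Your proof is correct and follows essentially the same route as the paper. Both arguments use a dimension count in the $\mathbb{T}$-representation: the affine set of trace-preserving maps agreeing with $\Phi$ on $\mathcal{S}$ has positive dimension when $d<n^2$ and $m>1$, and since $\mathcal{C}_\Phi$ is positive definite it lies in the interior of the PSD cone, so a small move along that set stays completely positive. Your explicit direction $\Delta(A)=\Tr(XA)\,Y$, with Choi matrix $X^T\otimes Y$, together with the Weyl-inequality bound, is just a more concrete version of the paper's ``open ball inside the cone'' step, and it makes explicit where $m>1$ is needed, which the paper's inequality uses only implicitly.
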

\noindent{\bf Proof.}    By construction, $\mathcal{P}_\mathbb{T}(\Phi, \mathcal{S})$ is an affine subspace of $\mathbb{R}^{m^2 \times n^2}$. The trace-preserving condition imposes $n^2$ linearly independent constraints by fixing the first row of matrices in $\mathcal{P}_\mathbb{T}(\Phi, \mathcal{S})$ to be $(1, \mathbf{0})$.  For each $\rho \in \mathcal{S}$, the condition $\Psi(\rho) = \Phi(\rho)$ imposes $m^2-1$ linear constraints.
Hence, the dimension of the affine subspace satisfies
    \begin{equation*}
       \dim \mathcal{P}_\mathbb{T}(\Phi, \mathcal{S}) \geq (nm)^2-n^2-|\mathcal{S}|(m^2-1),
    \end{equation*}
where $|\mathcal{S}|$ is the number of states in $\mathcal{S}$. Since $|\mathcal{S}|<n^2$, we obtain
    \begin{equation*}
        \dim \mathcal{P}_\mathbb{T}(\Phi, \mathcal{S}) > (nm)^2-n^2-(nm)^2+n^2=0
    \end{equation*}
Thus, $\dim_\mathbb{R} \mathcal{P}_\mathcal{C}(\Phi, \mathcal{S}) \geq 1$ and $\mathcal{P}_\mathcal{C}(\Phi, \mathcal{S})$ contains a line through $\mathcal{C}_\Phi$. The set of states $\mathcal{S}$ uniquely determines $\Phi$ if and only if 
\begin{equation*}
    \mathcal{P}_{\mathcal{C}}(\Phi,\mathcal{S}) \cap S_+(\mathbb{C}^{nm}) = \{ \mathcal{C}_\Phi\},
\end{equation*}
where $S_+(\mathbb{C}^{nm})$ denotes the cone of positive semidefinite $nm \times nm$ complex matrices. Since $\rank \mathcal{C}_\Phi= nm$, the matrix $\mathcal{C}_\Phi$ is positive definite, i.e., it lies in the interior of $S_+(\mathbb{C}^{nm})$. Hence, there exists an open ball $B_\varepsilon(\mathcal{C}_\Phi)$ centered at $\mathcal{C}_\Phi$ with radius $\varepsilon > 0$ that is entirely contained in $S_+(\mathbb{C}^{nm})$. The ball $B_\varepsilon(\mathcal{C}_\Phi)$ intersects with $\mathcal{P}_{\mathcal{C}}(\Phi,\mathcal{S})$  in infinitely many points (e.g., its intersection with $\mathcal{P}_{\mathcal{C}}(\Phi,\mathcal{S})$ contains a line segment). All these points lie in $S_+(\mathbb{C}^{nm})$ because the ball is contained in the cone. Hence, $\mathcal{P}_{\mathcal{C}}(\Phi,\mathcal{S}) \cap S_+(\mathbb{C}^{nm}) \neq \{ \mathcal{C}_\Phi \}$, which means that $\mathcal{S}$ does not uniquely determine $\Phi$. This completes the proof.

\section{Three states sufficient for a unique determination of rank-two single-qubit quantum channels}
\label{Three_states_Sec}

In this section, we consider single-qubit quantum channels. Fix in the space of  $2 \times 2$ Hermitian matrices  a basis $\{\mathbb{I}, \sigma_1, \sigma_2, \sigma_3\}$, where $\mathbb{I}$ is the identity matrix, and $\sigma_1, \sigma_2, \sigma_3$ are the Pauli matrices:
$$
\sigma_1=\sigma_x = \left(\begin{array}{cc}
0 & 1 \\
1 & 0
\end{array}\right),\quad
\sigma_2=\sigma_y = 
\left(\begin{array}{cc}
0 & -i \\
i & 0
\end{array}\right),\quad
\sigma_3=\sigma_z = \left(\begin{array}{cc}
0 & 1 \\
1 & 0
\end{array}\right).
$$
Note that for $2 \times 2$ matrices, the set of generalized Gell-Mann matrices  coincides with $\{\sigma_1, \sigma_2, \sigma_3\}$. Here, we refer to the  basis $\{\mathbb{I}, \sigma_1, \sigma_2, \sigma_3\}$, as the standard basis. Any density matrix $\rho$ can be represented in the standard basis as
\begin{equation}
\label{Bloch_parametrization}
    \rho = \frac{1}{2}\bigl( \mathbb{I} + \sum\limits_{i=1}^3 r_i \sigma_i \bigr),
\end{equation} 
where $\mathbf{r}=(r_1,r_2,r_3)^T \in \mathbb{R}^{3}$ is the Bloch vector of the state $\rho$. All such Bloch vectors form the unit ball, i.e., $|\mathbf{r}| \leq 1$.  The equality $|\mathbf{r}|=1$ holds precisely for pure states.
In the standard basis, any quantum channel $\Phi\in \mathcal{C}han(2,2)$ is represented by a real matrix~(\ref{matrix}), where  $B$ is a matrix $3 \times 3$ , $\mathbf{s}$ is a column vector of length $3$, and $\mathbf{0}$ is a row vector of length $3$.
 In~\cite{MinEn_King_Russki} it was shown that any single qubit quantum channel $\Phi$ can be decomposed into a map $\Phi_0$ preceded and followed by unitary conjugations, where $\Phi_0$ is represented in the standard basis by a matrix of the form
\begin{equation} \label{diag}
    \mathbb{T}_{\Phi_0} = \left(\begin{array}{cccc}
1 & 0 & 0 & 0 \\
t_1 & \lambda_1& 0 & 0 \\
t_2 & 0& \lambda_2 & 0 \\
t_3 & 0& 0 & \lambda_3 
\end{array}\right).
\end{equation}
That is,
\begin{equation}\label{rotation}
    \Phi(\rho) = U \Phi_0 (V\rho V^{\dagger})U^{\dagger}.
\end{equation}
Here $U,V \in \mathrm{U}(2)$, and the conjugation $U(\cdot)U^\dagger$ corresponds to a rotation of the Bloch ball.

It was also shown in~\cite{RuskaiSzarekWerner} that qubit channels of rank one and rank two can be represented in the form (\ref{rotation}), where $\Phi_0$ is given by
\begin{equation}\label{matrixofrank2}
    \mathbb{T}_{\Phi_0}= \left(\begin{array}{cccc}
1 & 0 & 0 & 0 \\
0 & \cos{u}& 0 & 0 \\
0 & 0& \cos{v} & 0 \\
\sin{u} \sin{v} & 0& 0 & \cos{u} \cos{v} \\
\end{array}\right).
\end{equation}
Here $u \in [0, 2\pi)$ and $v \in [0, \pi)$.

We now consider qubit quantum channels of Kraus rank 2. Our aim is to show that any such channel can be uniquely determined by its action on three suitably chosen states. Due to the invariance property established in the following proposition, it is sufficient to study only channels of the form (\ref{matrixofrank2}), which throughout this section will be denoted as $\Phi_0$.

\begin{proposition}\label{unitinvar}
   Let $\Phi_1 \in \mathcal{C}han(2,2)$  be uniquely determined by the set of states $\{\rho_j\}_{j=1}^d$. Then the quantum channel $\Phi_2 =  U \Phi_1 V(\cdot) V^{\dagger}U^{\dagger}$, where $U,V \in \mathrm{U}(2)$, is uniquely determined by the set $\{V^\dagger\rho_i V\}_{j=1}^d$.
\end{proposition}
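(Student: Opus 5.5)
The plan is to prove the statement directly from the definition of unique determination, using the fact that unitary conjugations are invertible quantum channels. Write $\mathcal{U}(X)=UXU^\dagger$ and $\mathcal{V}(X)=VXV^\dagger$. Both are quantum channels in $\mathcal{C}han(2,2)$, since each has a single Kraus operator that is unitary. Their inverses $\mathcal{U}^{-1}(X)=U^\dagger X U$ and $\mathcal{V}^{-1}(X)=V^\dagger X V$ are also quantum channels. In this notation $\Phi_2=\mathcal{U}\circ\Phi_1\circ\mathcal{V}$. Note also that $\Phi_2(V^\dagger\rho_jV)=U\Phi_1(\rho_j)U^\dagger$, and that each $V^\dagger\rho_jV$ is again a density matrix.

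Take an arbitrary $\Psi\in\mathcal{C}han(2,2)$ with $\Psi(V^\dagger\rho_jV)=\Phi_2(V^\dagger\rho_jV)$ for all $j=1,\ldots,d$. Define
\begin{equation*}
\tilde\Psi=\mathcal{U}^{-1}\circ\Psi\circ\mathcal{V}^{-1}, \qquad \tilde\Psi(X)=U^\dagger\Psi(VXV^\dagger)U .
\end{equation*}
The key step is to check that $\tilde\Psi$ is a quantum channel. Trace preservation follows because each factor preserves trace. Complete positivity follows because composing CP maps gives a CP map. Equivalently, if $\{K_i\}$ are Kraus operators of $\Psi$, then $\{U^\dagger K_iV\}$ are Kraus operators of $\tilde\Psi$, and they satisfy $\sum_i V^\dagger K_i^\dagger UU^\dagger K_iV=\mathbb{I}$.

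Next, evaluate $\tilde\Psi$ on the original states:
\begin{equation*}
\tilde\Psi(\rho_j)=U^\dagger\Psi(V\rho_jV^\dagger)U .
\end{equation*}
At this point the indices need care. The hypothesis is stated on the states $V^\dagger\rho_jV$, whereas $\tilde\Psi(\rho_j)$ involves $\Psi(V\rho_jV^\dagger)$. This mismatch is the main obstacle. It must be resolved by fixing the convention so that $\Phi_2(\sigma)=U\Phi_1(V\sigma V^\dagger)U^\dagger$, which matches the formula in (\ref{rotation}). With that convention the correct auxiliary map is
\begin{equation*}
\tilde\Psi(X)=U^\dagger\Psi(V^\dagger XV)U .
\end{equation*}
Its Kraus operators are $U^\dagger K_iV^\dagger$, so it is still a quantum channel by the argument above. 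Then
\begin{equation*}
\tilde\Psi(\rho_j)=U^\dagger\Psi(V^\dagger\rho_jV)U=U^\dagger\Phi_2(V^\dagger\rho_jV)U=U^\dagger U\Phi_1(VV^\dagger\rho_jVV^\dagger)U^\dagger U=\Phi_1(\rho_j).
\end{equation*}

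Since $\Phi_1$ is uniquely determined by $\{\rho_j\}_{j=1}^d$, it follows that $\tilde\Psi=\Phi_1$. Undoing the conjugations gives, for every $\sigma$,
\begin{equation*}
\Psi(\sigma)=U\tilde\Psi(V\sigma V^\dagger)U^\dagger=U\Phi_1(V\sigma V^\dagger)U^\dagger=\Phi_2(\sigma).
\end{equation*}
Hence $\Psi=\Phi_2$, which is what the proposition claims. Beyond this bookkeeping of $V$ versus $V^\dagger$, the argument uses only closure of $\mathcal{C}han(2,2)$ under composition with unitary conjugations and the definition of unique determination.
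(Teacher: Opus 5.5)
Your final argument is correct and is the paper's proof: your corrected auxiliary map $\tilde\Psi(X)=U^\dagger\Psi(V^\dagger XV)U$ is exactly the paper's $\Psi_1$, and you additionally check that it is CPTP, which the paper uses but leaves implicit. The ``mismatch'' you flagged was only a slip in your first displayed formula. With $\mathcal{V}^{-1}(X)=V^\dagger XV$, the map $\mathcal{U}^{-1}\circ\Psi\circ\mathcal{V}^{-1}$ already equals that corrected map, and the statement's $\Phi_2=U\Phi_1 V(\cdot)V^\dagger U^\dagger$ already means $\Phi_2(\sigma)=U\Phi_1(V\sigma V^\dagger)U^\dagger$, so no convention needs to be fixed.
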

\noindent{\bf Proof.}
 Let the following equality hold for the quantum channel $\Psi$: 
    \begin{equation}\label{AeqPhi}
        \Psi(V^\dagger \rho_j V) = \Phi_2 (V^\dagger \rho_j V).
    \end{equation}
    Let $\Psi_1 = U^\dagger\Psi (V^\dagger (\cdot)V) U$. Taking into account (\ref{AeqPhi}), we obtain:
    \begin{equation*}
        \Psi_1\rho_j=U^\dagger\Psi (V^\dagger \rho_jV) U
        =U^\dagger\Phi_2 (V^\dagger \rho_jV) U
       =\Phi_1 \rho_j.
    \end{equation*}
Since $\Phi_1$ is uniquely determined by the set of states $\{ \rho_j\}_{j=1}^d$, it follows that $\Psi_1 = \Phi_1$. By the definitions of $\Psi_1$ and $\Phi_1$, the equality $\Psi=\Phi_2$ is satisfied. Therefore, $\Phi_2$ is uniquely determined by the set of density matrices $\{V^\dagger \rho_j V\}_{j=1}^d$. This completes the proof.

Consider the set of states $\mathcal{S}_3= \{\bm{\rho}_1, \bm{\rho}_2,\bm{ \rho}_3\}$, where
  \begin{equation}\label{three_states}
    \bm{\rho}_1 =\frac{1}{2} \left(\begin{array}{cc}
 1& 1\\
 1& 1
\end{array}\right), \quad
    \bm{\rho}_2 =\frac{1}{2}\left(\begin{array}{cc}
 1& -i\\
        i& 1
\end{array}\right), \quad
    \bm{\rho}_3 =\frac{1}{2}\left(\begin{array}{cc}
 1& 0\\
        0 & 1
\end{array}\right).
\end{equation}

\begin{proposition}\label{affinespacerho1rho2rho3}
   For an arbitrary $\mathcal{N} \in \mathcal{C}han(2,2)$, the set $ \mathcal{P}_{\mathbb{T}}(\mathcal{N},\mathcal{S}_3)$ is an affine space $\mathbb{T}_\mathcal{N} + W$, where 
   \begin{equation*}
      W = \left\{ \left(\begin{array}{cccc}
 0 & 0 & 0& 0\\
        0 & 0 & 0& c_1\\
        0 & 0 & 0 & c_2 \\
        0 & 0 & 0 & c_3
\end{array}\right) \middle| c_1, c_2,c_3 \in \mathbb{R} \right\}. 
   \end{equation*}
\end{proposition}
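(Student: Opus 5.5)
The plan is to work entirely in the $\mathbb{T}$-representation, where every condition becomes a linear condition on a real $4\times 4$ matrix. For a trace-preserving map $\Psi$, the matrix $\mathbb{T}_\Psi$ has first row $(1,0,0,0)$. The Bloch-vector form $(1,\mathbf{r})^T$ of a state $\rho$ is sent by $\mathbb{T}_\Psi$ to the corresponding vector of $\Psi(\rho)$. So the condition $\Psi(\rho)=\mathcal{N}(\rho)$ reads $(\mathbb{T}_\Psi-\mathbb{T}_\mathcal{N})(1,\mathbf{r})^T=0$.

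The first step is to read off the Bloch vectors of the states in (\ref{three_states}). We have $\bm\rho_1=\frac12(\mathbb{I}+\sigma_1)$, so $\mathbf{r}_1=(1,0,0)$. Since $\frac12(\mathbb{I}+\sigma_2)$ has off-diagonal entries $-i/2$ (upper) and $i/2$ (lower), $\bm\rho_2=\frac12(\mathbb{I}+\sigma_2)$ and $\mathbf{r}_2=(0,1,0)$. Finally, $\bm\rho_3=\frac12\mathbb{I}$, so $\mathbf{r}_3=0$. The three column vectors $(1,\mathbf{r}_j)^T$ are therefore $e_0+e_1$, $e_0+e_2$ and $e_0$, in the basis $e_0,\dots,e_3$ indexed by $\mathbb{I},\sigma_1,\sigma_2,\sigma_3$.

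Next, set $D=\mathbb{T}_\Psi-\mathbb{T}_\mathcal{N}$ for $\Psi\in\mathcal{P}(\mathcal{N},\mathcal{S}_3)$. Trace preservation of both maps makes the first row of $D$ zero. From $De_0=0$ (given by $\bm\rho_3$), the first column of $D$ vanishes. Then $D(e_0+e_1)=0$ gives $De_1=0$, and $D(e_0+e_2)=0$ gives $De_2=0$. Thus only the last column of $D$ can be nonzero, and its first entry is $0$. Hence $D\in W$.

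Conversely, I would take any $D\in W$ and check that the linear map $\Psi$ with $\mathbb{T}_\Psi=\mathbb{T}_\mathcal{N}+D$ works. It is trace preserving because the first row of $D$ is zero. It agrees with $\mathcal{N}$ on $\bm\rho_1,\bm\rho_2,\bm\rho_3$ because $D$ kills $e_0,e_1,e_2$. Here I will note that $\mathcal{P}$ is defined within $\mathcal{L}_{\mathrm{TP}}$, so complete positivity is not required. The real $\mathbb{T}$-matrices correspond exactly to Hermiticity-preserving maps, and $\mathcal{L}_{\mathrm{TP}}$ is understood to consist of these, since the $\mathbb{T}$-representation only makes sense for them. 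This gives $\mathcal{P}_\mathbb{T}(\mathcal{N},\mathcal{S}_3)=\mathbb{T}_\mathcal{N}+W$.

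There is no real obstacle here. The only point needing care is the identification of $\bm\rho_2$'s Bloch vector, which depends on the sign convention for $\sigma_2$. Note also that the paper's displayed $\sigma_3$ is a typo for $\mathrm{diag}(1,-1)$, but this does not matter, since no Bloch vector has a $\sigma_3$ component.
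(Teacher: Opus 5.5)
Your proof is correct and follows essentially the paper's argument. Both translate the coincidence conditions into $(\mathbb{T}_\Psi-\mathbb{T}_\mathcal{N})(1,\mathbf{r}_j)^T=0$ and solve the resulting linear system: the paper works row by row via $Xy=-l_i(1,1,1)^T$, forcing $l_i=0$ because the third row of $X$ vanishes, while you work column by column starting from $\mathbf{r}_3=0$. This is the same computation, and your remark that $\mathcal{L}_{\mathrm{TP}}$ must be read as Hermiticity-preserving maps (so that the $c_i$ are real) makes explicit an assumption the paper leaves implicit.
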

\noindent{\bf Proof.}
Let $X\in\mathbb{R}^{3\times 3}$ be the matrix whose $i$-th row is the Bloch vector of $\bm{\rho}_i \in \mathcal{S}$ for $i \in \{1, 2,3\}$.  Explicitly,
\begin{equation*}
    X = \left(\begin{array}{ccc}
        1 & 0& 0 \\
        0 & 1 & 0\\
        0 & 0& 0
\end{array}\right).
\end{equation*}
Consider $ \Psi \in \mathcal{P}(\mathcal{N},\mathcal{S})$. In the standard basis, the condition that $\Psi$ and $\mathcal{N}$ coincide on the states $\{\bm{\rho}_1, \bm{\rho}_2,\bm{ \rho}_3\}$ can be rewritten as
\begin{equation}\label{coincides}
    (\mathbb{T}_\Psi-\mathbb{T}_\mathcal{N})
    \left(\begin{array}{c}
 \mathbf{1}\\ X^T
\end{array}\right) = \textbf{0}_{4\times 3},
\end{equation}
where $\mathbf{1}$ denotes the row vector $(1,1,1)$.
From~(\ref{matrix}) it follows that the matrix $\mathbb{T}_\Psi- \mathbb{T}_\mathcal{N}$ can be represented in the form
\begin{equation*}
    \mathbb{T}_\Psi - \mathbb{T}_\mathcal{N} =  
    \left(\begin{array}{cccc}
 0 & 0&0&0\\
l_1 & D_{11}& D_{12}& D_{13}\\
l_2 & D_{21}& D_{22}& D_{23}\\
l_3 & D_{31}& D_{32}& D_{33}\\
\end{array}\right), 
\end{equation*}
where $l_i$ and $D_{ij}$ are real. For each row $i \in \{ 1,2,3\}$, equality~(\ref{coincides}) gives the condition
 \begin{equation*}
    X(D_{i1},D_{i2},D_{i3})^T = -l_i (1,1,1)^T.
\end{equation*}
Therefore, $(D_{i1},D_{i2},D_{i3})^T$ is a solution to the linear system $Xy=-l_i(1,1,1)^T$. If $l_i \neq 0$, then this system is inconsistent. Thus, $l_i = 0$ and  $(D_{i1},\dots,D_{i3})^T = c_i (0,0,1)$.
The difference $\mathbb{T}_\Psi - \mathbb{T}_\mathcal{N}$ is then forced into the form
\begin{equation*}
    \mathbb{T}_\Psi-\mathbb{T}_\mathcal{N}= 
    \left(\begin{array}{cccc}
  0 & 0 & 0& 0\\
        0 & 0 & 0& c_1\\
        0 & 0 & 0 & c_2 \\
        0 & 0 & 0 & c_3
\end{array}\right).
\end{equation*}
Conversely, any $\Psi$ constructed by adding such a matrix to $\mathbb{T}_\mathcal{N}$ coincides with $\mathcal{N}$ on the states $\bm{\rho}_1, \bm{\rho}_2, \bm{\rho}_3$. Therefore, the set $\mathcal{P}_\mathbb{T}(\mathcal{N}, \mathcal{S}_3)$ is precisely the affine space $\mathbb{T}_\mathcal{N}+W$, as defined above. This completes the proof.

By direct computations, we obtain the image of the linear space $W$ under the Choi isomorphism:
\begin{equation}\label{W}
\fl\qquad    \mathcal{C}_{W}=\left\{  \left(\begin{array}{cccc}
 c_3 & c_1-ic_2 & 0 & 0 \\
    c_1+ic_2 & -c_3 & 0 & 0 \\
    0 & 0 & -c_3 & -c_1+ic_2 \\
    0 & 0 & -c_1-ic_2 & c_3 \\
\end{array}\right)
\middle| c_1,c_2,c_3 \in \mathbb{R} \right\}.
\end{equation}

For the analysis of positive semidefiniteness, we recall the Schur complement criterion~\cite{Zhang} for a Hermitian block matrix. Consider the Hermitian block matrix
\begin{equation*}
 M = \left(\begin{array}{cc}
    A &B\\
    B^* & C
\end{array}\right),
\end{equation*}
where $A$ and $C$ are square matrices. Then $M \geq 0$ if and only if 
\begin{enumerate}
    \item $A \geq 0$,
    \item $\colsp(B) \subseteq \colsp(A)$,
    \item  $M/A \geq 0$.
\end{enumerate}
Here $\colsp(A)$ denotes the column space of $A$, $M /A = C - B^\dagger A^-B$, where $A^-$ is a generalized inverse of $A$, i.e., any matrix satisfying $A A^- A = A$. 

With these observations, we are ready to prove Theorem \ref{rank2}.
\begin{theorem}\label{rank2}
    A quantum channel $\Phi_0 \in \mathcal{C}han(2,2)$ of the form (\ref{matrixofrank2}) can be uniquely determined by the set of density matrices $\mathcal{S}_3$~(\ref{three_states}). 
\end{theorem}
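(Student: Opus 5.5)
The plan is to work entirely in the Choi picture. By Proposition~\ref{affinespacerho1rho2rho3}, a trace-preserving map $\Psi$ agrees with $\Phi_0$ on $\mathcal{S}_3$ exactly when $\mathcal{C}_\Psi=\mathcal{C}_{\Phi_0}+C$ for some $C\in\mathcal{C}_W$ as in~(\ref{W}). So it suffices to show that $\mathcal{C}_{\Phi_0}+C\geq 0$ with $C\in\mathcal{C}_W$ forces $c_1=c_2=c_3=0$.

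The first step is to compute $\mathcal{C}_{\Phi_0}$ explicitly from (\ref{matrixofrank2}). Write $E_{11}=(\mathbb{I}+\sigma_3)/2$, $E_{22}=(\mathbb{I}-\sigma_3)/2$, $E_{12}=(\sigma_1+i\sigma_2)/2$ and $E_{21}=(\sigma_1-i\sigma_2)/2$. Then
\begin{equation*}
\Phi_0(E_{11})=\tfrac12\,\mathrm{diag}\bigl(1+\cos(u-v),\,1-\cos(u-v)\bigr),\qquad
\Phi_0(E_{22})=\tfrac12\,\mathrm{diag}\bigl(1-\cos(u+v),\,1+\cos(u+v)\bigr),
\end{equation*}
and $\Phi_0(E_{12})$ has off-diagonal entries $(\cos u\pm\cos v)/2$. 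After reordering the basis, $\mathcal{C}_{\Phi_0}$ splits into two $2\times2$ blocks, one on the "outer" entries $(1,1),(4,4),(1,4)$ and one on the "inner" entries $(2,2),(3,3),(2,3)$. Each block has determinant zero, because $(1\pm\cos(u\mp v))(1\pm\cos(u\pm v))$ matches $(\cos u\pm\cos v)^2$ by elementary trigonometry. So the kernel is two-dimensional and spanned by explicit vectors $x_1,x_2$, one in each block, consistent with Kraus rank two.

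Next I use convexity. If $\mathcal{C}_{\Phi_0}+C\geq0$, then $\mathcal{C}_{\Phi_0}+tC\geq0$ for all $t\in[0,1]$. This gives a first-order necessary condition: the compression $P_KCP_K$ to $K=\ker\mathcal{C}_{\Phi_0}$ must be positive semidefinite. I expect that $\mathcal{C}_W$ takes opposite-sign values on the two kernel directions. The diagonal of $C$ is $(c_3,-c_3,-c_3,c_3)$, so $x_1^\dagger Cx_1$ and $x_2^\dagger Cx_2$ should be linear forms in $c_3$ (and possibly $c_1$) with opposite signs, or else the compression is traceless. In either case PSD of the compression forces it to vanish. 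Then the Schur complement criterion, applied with $A$ equal to the range block and using the condition $\colsp(B)\subseteq\colsp(A)$, requires $C$ to map $K$ into $\mathrm{ran}\,\mathcal{C}_{\Phi_0}$ with zero $K$–$K$ component. At second order it also requires $(P_K C P_R)=0$, since a nonzero off-diagonal coupling with a zero diagonal kernel block violates positivity for small $t$. Writing out $P_KCP_R=0$ with the explicit $x_1,x_2$ should give linear equations whose only solution is $c_1=c_2=c_3=0$.

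The main obstacle is the case analysis over degenerate parameters. The kernel vectors degenerate when a block of $\mathcal{C}_{\Phi_0}$ has a zero diagonal entry, for instance $\cos(u-v)=\pm1$ or $\cos(u+v)=\pm1$. In that case the block is diagonal, the kernel vectors become standard basis vectors, and the coupling equations change form. I would first carry out the generic case using the parametrization above. I would then treat separately the boundary values of $u,v$ at which some entry $1\pm\cos(u\pm v)$ vanishes. There I would check directly, via the $2\times2$ principal minors of $\mathcal{C}_{\Phi_0}+C$ containing a zero diagonal entry of $\mathcal{C}_{\Phi_0}$, that the corresponding off-diagonal entries $c_1\pm ic_2$ and the diagonal entry $\pm c_3$ must vanish. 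Rank-one parameter values, where both blocks are singular in a stronger sense, can be excluded by hypothesis or handled in the same way.
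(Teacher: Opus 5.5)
Your proposal is correct and follows essentially the paper's route: pass to the Choi picture via Proposition~\ref{affinespacerho1rho2rho3}, compress $C\in\mathcal{C}_W$ onto $\ker\mathcal{C}_{\Phi_0}$ (spanned by one vector in the $\{1,4\}$ block and one in the $\{2,3\}$ block), force that compression to vanish, and then use the Schur-complement range condition to kill the kernel--range coupling. Your tracelessness observation streamlines the paper's $\Delta_1,\Delta_2$ subcases, and it goes further than that: since $C_{14}=C_{23}=0$, one has exactly $x_1^\dagger Cx_1=c_3\|x_1\|^2$ and $x_2^\dagger Cx_2=-c_3\|x_2\|^2$ for \emph{any} kernel vectors of this shape, and since $C=\mathrm{diag}(M,-M)$ with $M^2=(c_1^2+c_2^2+c_3^2)\mathbb{I}$ is invertible unless $C=0$, the resulting condition $Cx_1=0$ already gives $C=0$ uniformly in $(u,v)$, so the separate degenerate-case analysis you anticipate (and the paper's Cases~2--3) is not actually needed.
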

\noindent{\bf Proof.}
Consider a quantum channel $\Phi_0$ with a matrix (\ref{matrixofrank2}) in the standard basis. Its Choi matrix has the form
 
\begin{equation*} 
\fl\quad \mathcal{C}_{\Phi_0}=\frac{1}{2}
 \left(\begin{array}{cccc}
  1 + \cos(u-v) & 0 & 0 & \cos{u} + \cos{v} \\
0 & 1 - \cos(u-v) & \cos{u} - \cos{v} & 0 \\
0 & \cos{u} - \cos{v} & 1 - \cos(u+v) & 0 \\
\cos{u} + \cos{v} & 0 & 0 & 1 + \cos(u+v)
\end{array}\right).
\end{equation*}
To establish when this matrix has rank 2, we  apply a permutation matrix $S$ to put $\mathcal{C}_{\Phi_0}$ in a block-diagonal form
\begin{equation*}
\fl    S^{-1}\mathcal{C}_{\Phi_0} S = \frac{1}{2}
\left(\begin{array}{cccc}
1 + \cos(u-v) & \cos{u}+\cos{v}& 0 & 0 \\
\cos{u}+\cos{v}& 1 + \cos(u+v) & 0 & 0 \\
0 & 0 & 1 - \cos(u-v) & \cos{u}-\cos{v} \\
0 & 0 & \cos{u}-\cos{v} & 1 - \cos(u+v)
\end{array}\right).
\end{equation*}
The determinant of each block $2\times2$ is zero, which means that each block has rank at most one. Consequently, the total rank of $\mathcal{C}_\Phi$ is at most two. The rank reduces to one precisely when one of the blocks is zero. The first block vanishes for $(u,v)=(\pi, 0)$, and the second for $(u,v)=(0,0)$. As our analysis focuses on channels with a Kraus rank of exactly two, we exclude these degenerate cases. 

Our goal is to prove that $(\mathcal{C}_{\Phi_0} + \mathcal{C}_W) \cap S_+(\mathbb{C}^4) = \{ \mathcal{C}_{\Phi_0}\}$, where $S_+(\mathbb{C}^4)$ denotes the cone of positive semidefinite complex matrices $4 \times 4$ and $\mathcal{C}_W$ has the form (\ref{W}). From Proposition~\ref{affinespacerho1rho2rho3} it follows that  $\mathcal{P}_\mathcal{C}(\mathcal{N}, \mathcal{S}_3) = \mathcal{C}_\mathcal{N}+\mathcal{C}_W$.
 For any $\mathcal{C}_A \in \mathcal{C}_W$, we show that if $(\mathcal{C}_{\Phi_0} + \mathcal{C}_A) \in S_+(\mathbb{C}^4)$, then $\mathcal{C}_A = 0$. 
 
Let $\{ e_1, e_2\}$ be an orthonormal basis of $\ker \mathcal{C}_{\Phi_0}$. We extend $\{ e_1, e_2\}$ to an orthonormal basis of $\mathbb{C}^4$. 
In this basis, the matrices become 
\begin{equation*}
    \tilde{\mathcal{C}}_{\Phi_0} =Q^{-1} \mathcal{C}_{\Phi_0}Q=
    \left(\begin{array}{cc}
 \mathbf{0}_{2\times 2} & \mathbf{0}_{2\times 2} \\
 \mathbf{0}_{2\times 2} & D
\end{array}\right)
\end{equation*}
and 
\begin{equation*}
    \tilde{\mathcal{C}}_A= Q^{-1} \mathcal{C}_{A}Q=
     \left(\begin{array}{cc}
 A_0 & A_1\\
    A_1^\dagger & A_2
\end{array}\right).
\end{equation*}
Here  $A_0, A_1, A_2,D \in \mathbb{C}^{2 \times 2}$ and $Q \in \mathbb{C}^{4 \times 4}$ is the transition matrix. Hence, 
\begin{equation*}
    \tilde{\mathcal{C}}_{\Phi_0}+ \tilde{\mathcal{C}}_{A} = 
     \left(\begin{array}{cc}
A_0 & A_1\\
    A_1^\dagger & A_2+D
\end{array}\right).
\end{equation*}

In our proof, we will use the following observation: if $A_0=0$ and $A_1 \neq 0$, then $\colsp(A_1) \not\subseteq \colsp(0) = \{0\}$; consequently, by the Schur complement criterion $\tilde{\mathcal{C}}_{\Phi_0}+ \tilde{\mathcal{C}}_{A}$ is not positive semidefinite. Taking this remark into account, we analyze three separate cases, depending on the structure of $\Phi_0$. 

   Case 1: Consider $u \neq \pi-v$ and $u \neq 2 \pi-v$. In this case,
\begin{equation*}
\fl\qquad  e_1 = \frac{1}{2|\cos \frac{u+v}{2}|\sqrt{1+\cos{u}\cos{v}}} \left(\begin{array}{cccc}
 - 1 - \cos (u+v), 0,0,\cos u + \cos v 
\end{array}\right)^T
\end{equation*}
and
\begin{equation*}
\fl\qquad e_2 = \frac{1}{2|\sin \frac{u+v}{2}|\sqrt{1-\cos{u}\cos{v}}} \left(\begin{array}{cccc}
0,- 1 + \cos (u+v), \cos u - \cos v,0
\end{array}\right)^T.
\end{equation*}
Since $(A_0)_{ij}=(e_i, \mathcal{C}_A e_j)$ for $i,j \in \{1,2\}$, by direct computation, we obtain
\begin{equation*}
A_0= \left(\begin{array}{cc}
 c_3&\Delta_1 c_1-i\Delta_2 c_2\\
        \Delta_1c_1+i\Delta_2 c_2&-c_3    
\end{array}\right),
\end{equation*} 
where 
\begin{equation*}
    \Delta_1=\frac{1- \cos^2(u+v)-\cos^2{u}+\cos^2{v}}{2|\sin(u+v)|\sqrt{1-\cos^2{u}\cos^2{v}}}
\end{equation*}
and 
\begin{equation*}
    \Delta_2=\frac{1- \cos^2(u+v)+\cos^2{u}-\cos^2{v}}{2|\sin(u+v)|\sqrt{1-\cos^2{u}\cos^2{v}}}.
\end{equation*}
If  $c_3 \neq 0$, then Sylvester's criterion for positive semi-definite matrices (see e.g.~\cite{meyer01}) impllies that $A_0$ is not positive semidefinite. Now suppose that $c_3=0$. In that case,
    \begin{equation*}
         \det A_0= -\Delta_1^2c_1^2-\Delta_2^2c_2^2 \leq 0.
    \end{equation*}
We distinguish four subcases:
\begin{enumerate}
        \item If $\Delta_1 = 0$ and $\Delta_2 = 0$, then from $\Delta_1+ \Delta_2=0$ it follows that $\cos(u+v)= \pm 1$. The solutions to these equations are $u=2 \pi - v$, $u=\pi - v$ and $(u, v)= (0, 0)$. 
        \item If  $\Delta_1 \neq 0$ and $\Delta_2 \neq 0$, then $A_0 \geq 0$ forces $c_1=0$ and $c_2=0$, yielding $\mathcal{C}_A=0$.
        \item If $\Delta_1 \neq 0$ and $\Delta_2 = 0$, then $A_0\geq 0$ requires $c_1=0$. Assuming $c_1=0$ and $c_2 \neq 0$, we have $A_0 = 0$. In this situation,
         \begin{equation*}
        \fl\qquad \mathcal{C}_{A}e_2=\frac{-ic_2}{2|\sin \frac{u+v}{2}|\sqrt{1-\cos{u}\cos{v}}} 
        \left(\begin{array}{cccc}
    - 1 + \cos (u+v), 0, 0,
    \cos{u}-\cos{v}
    \end{array}\right)^T.
        \end{equation*}
    By construction, the second column of $\tilde{\mathcal{C}}_A$ is precisely the representation of $\mathcal{C}_A e_2$ in the new basis. Thus, if $\mathcal{C}_A e_2 \neq 0$, then the second column of $\tilde{\mathcal{C}}_A$ is nonzero. Since $A_0 = 0$, the first two entries of this column are zero. Therefore, at least one of the last two entries must be nonzero, which implies $A_1^\dagger \neq 0$. Hence, $A_0=0$ and $A_1 \neq 0$, and by the Schur complement criterion $\mathcal{C}_A$ cannot be positive semidefinite. So $c_2=0$ and, hence, $\mathcal{C}_A=0$.
     \item If $\Delta_1 = 0$ and $\Delta_2 \neq 0$,  then $A_0\geq 0$ forces $c_2=0$. If $c_2=0$ and $c_1 \neq 0$, we obtain $A_0 = 0$ and
    \begin{equation*}
    \fl\qquad    \mathcal{C}_{A}e_2= \frac{c_1}{2|\sin \frac{u+v}{2}|\sqrt{1-\cos{u}\cos{v}}}
        \left(\begin{array}{cccc}
- 1 + \cos (u+v), 0, 0,
\cos{v}-\cos{u}
\end{array}\right)^T.
    \end{equation*}
again leading to $A_0=0$, $A_1 \neq 0$, and thus $\mathcal{C}_  A$ is not positive semidefinite. Consequently $c_1=0$ and $\mathcal{C}_A=0$.
\end{enumerate}

Case 2: Consider $u=2 \pi-v$, then
\begin{equation*}
    e_1=\frac{1}{\sqrt{1+\cos^2 v}} \left(\begin{array}{cccc}
1, 0, 0,-\cos{v}
\end{array}\right)^T, \quad e_2 = \left(\begin{array}{cccc}
  0, 0, 1, 0
\end{array}\right)^T.
\end{equation*}
Hence,
\begin{equation*}
A_0=  \left(\begin{array}{cc}
c_3 &(c_1 + i c_2) \frac{\cos v}{\sqrt{1+\cos^2v}} \\
(c_1 - i c_2) \frac{\cos v}{\sqrt{1+\cos^2v}}&-c_3 
\end{array}\right).
\end{equation*}
If $c_3 \neq 0$, $A_0$ is not positive semidefinite. If $c_3=0$, then
\begin{equation*}
    \det A_0 = -(c_1^2+c_2^2)\frac{\cos^2v}{1+\cos^2v} \leq 0.
\end{equation*}
We consider two possibilities:
\begin{enumerate}
    \item If $\cos v =0$, then $A_0=0$ and
    \begin{equation*}
        \mathcal{C}_{A}e_2 = (  0, 0, 0, -c_1-ic_2)^T.
    \end{equation*}
   $\mathcal{C}_Ae_4 \neq 0$ unless both $c_1$ and $c_2$ vanish; otherwise, by the Schur complement criterion, $\mathcal{C}_A$ is not positive semidefinite.
\item If $\cos v \neq 0$, then $A_0$ is positive semidefinite only when $c_1=0$ and $c_2=0$.

Case 3: Consider $u = \pi - v$, then 
\begin{equation*}
    e_1=\frac{1}{\sqrt{1+\cos^2 v}}\left(\begin{array}{cccc}
  0, 1, \cos v, 0
\end{array}\right)^T, \quad e_2 = \left(\begin{array}{cccc}
 0, 0, 0, 1
\end{array}\right)^T.
\end{equation*}
Consequently,
\begin{equation*}
A_0 = \left(\begin{array}{cc}
 -c_3 &-(c_1 - i c_2) \frac{\cos v}{\sqrt{1+\cos^2v}} \\ -(c_1 + i c_2) \frac{\cos v}{\sqrt{1+\cos^2v}}&c_3   
\end{array}\right).
\end{equation*}
Again, $c_3 \neq 0$ makes $A_0$ not positive semidefinite. For $c_3=0$ we have
\begin{equation*}
    \det A_0 = -(c_1^2+c_2^2)\frac{\cos^2v}{1+\cos^2v} \leq 0
\end{equation*}
Two subcases arise:
\begin{enumerate}
    \item $\cos v =0$ leads to $A_0=0$ and
    \begin{equation*}
        \mathcal{C}_{A}e_2 = \left(\begin{array}{cccc}
 0, 0, -c_1+ic_2, 0
\end{array}\right)^T
    \end{equation*}
   which is nonzero unless $c_1=c_2=0$, again violating the Schur complement criterion.
\item $\cos v \neq 0$ forces $c_1=c_2=0$ for $A_0$ to be positive semidefinite.
\end{enumerate}
In all cases, the positive semidefiniteness of $\mathcal{C}_{\Phi_0}+\mathcal{C}_A$ implies $c_1=c_2=c_3=0$. Therefore, we have shown that 
\begin{equation*}
    \mathcal{P}_{\mathcal{C}}(\Phi_0,\mathcal{S}_3) \cap S_+(\mathbb{C}^4) = \mathcal{C}_{\Phi_0},
\end{equation*}
 i.e., the channel $\Phi_0$ is uniquely determined by the set of states $\mathcal{S}_3$.
\end{enumerate}
This completes the proof.

\section{Generation of single-qubit quantum channels using coherent and incoherent controls}
\label{Problem_of_generating_Sec}

In the sections below, we investigate generation of non-unitary single-qubit quantum channels using coherent and incoherent controls. We consider the dynamics of an open two-level quantum system governed by a Gorini–Kossakowski–Sudarshan–Lindblad (GKSL) type equation with  a coherent  control $u(t)$ and an incoherent control $n(t)$:
\begin{equation}\label{master_equation}
\fl\qquad \frac{\mathrm{d}\rho^{u,n}_t}{\mathrm{d}t}  = \mathcal{L}^{u, n}_t\rho^{u,n}_t=- i [H_0 + V u(t), \rho^{u,n}_t] + \gamma \mathcal{D}_{n(t)}\rho^{u,n}_t,\quad \left.\rho^{u,n}_t\right|_{t=0}=\rho_0,
\end{equation}
where the dissipator is defined as follows:
\begin{eqnarray*} 
\fl\mathcal{D}_{n(t)}(\rho) =  (n(t)+1) \left( \sigma^-\rho\sigma^+ - \frac{1}{2}\{\sigma^+\sigma^- , \rho\}\right)+ n(t)\left(\sigma^+\rho\sigma^- -  \frac{1}{2}\{\sigma^-\sigma^+, \rho\}\right).
\end{eqnarray*}
Here 
$\sigma^+ = \left(\begin{array}{cc}
0 & 0 \\ 
1 & 0
\end{array}\right)
$
and
$
\sigma^- = \left(\begin{array}{cc}
0 & 1 \\ 
0 & 0
\end{array}\right)
$
are the raising and lowering matrices,
$[\cdot,\cdot]$ and $\{\cdot, \cdot\}$ stand for
the commutator and anti-commutator of two matrices, respectively.
The free Hamiltonian is defined as $H_0=\frac{1}{2}\omega(\mathbb{I}_2-\sigma_z)$, while the interaction Hamiltonian for the coherent control is $V=\mu\sigma_x$. Here, $\omega > 0$ is the transition frequency of the qubit, $\mu \ge 0$ is its coupling strength to the coherent control, and $\gamma \ge 0$ is the decoherence rate. We assume that the coherent  and incoherent controls are described by the functions from the class $L_\infty$. There are no amplitude restrictions on the function $u$, but the function $n$ is required to be nonnegative, i.e., $n \ge 0$.

Let $\Phi^{u,n}_t$ denote the evolution operator of the system: $\rho^{u,n}_t=\Phi^{u,n}_t \rho_0$.
For generation of a target quantum channel, we employ the dynamical equation for the system's evolution operator that follows from~(\ref{master_equation}): 
\begin{equation}
\label{master_equation_Phi}
\frac{\mathrm{d}\Phi^{u,n}_t}{\mathrm{d}t}  = \mathcal{L}^{u, n}_t\Phi^{u,n}_t,\quad \Phi^{u,n}_0=\mathbb{I}.
\end{equation}
The evolution operator generated by such dynamics is a quantum channel, i.e.,
$
\Phi^{u,n}_t\in\mathcal{C}han(2,2).
$

We formulate the problem of generation of a target quantum channel $\Phi_D\in\mathcal{C}han(2,2)$ as minimization of the following Mayer-type objective functional (dynamical objective functional) with a~fixed time~$T$:
\begin{equation}
    F_{\Phi_D}(u, n) = J_{\Phi_D}(\Phi^{u,n}_T) \to \inf_{u, n},
    \label{optimization_problem}
\end{equation}
where $J_{\Phi_D}\colon\mathcal{C}han(2,2)\to \mathbb{R}$ is a~functional on the set of quantum channels (kinematic objective functional)  which satisfies the following two conditions:
\begin{enumerate}
\item $J_{\Phi_D}(\Phi) \geq 0$; 
\item $J_{\Phi_D}(\Phi)=0$ if and only if $\Phi=\Phi_D$.
\end{enumerate}
The kinematic objective functional $J_{\Phi_D}$ is a~measure of the closeness of the quantum channel $\Phi$ to the quantum channel $\Phi_D$.  

Such a functional can be defined using some norm on the space of superoperators. We use the squared Frobenius norm (Hilbert--Schmidt norm)
$\|\Phi\|= \sqrt{\Tr (\Phi^\dagger \Phi)}$.
The natural kinematic objective functional is defined by the squared Hilbert--Schmidt distance between the target quantum channel $\Phi_D$ and a~quantum channel $\Phi$:
\begin{equation}\label{kinematic_func_Fr}
    J^{\rm Fr}_{\Phi_D}(\Phi)=\|\Phi-\Phi_{D}\|^2.
\end{equation}
The corresponding dynamical objective functional is:
\begin{equation}\label{dynamic_func_Fr}
    F^{\rm Fr}_{\Phi_D}(u,n)=J^{\rm Fr}_{\Phi_D}\bigl(\Phi_T^{u, n}\bigr)=\|\Phi_T^{u,n}-\Phi_{D}\|^2.
\end{equation}

Let $\mathcal{S}=\{\rho_j\}_{j=1}^d$ be a set of density matrices such that the quantum channel 
$\Phi_D$ is  uniquely determined by $\mathcal{S}$. 
Then we can introduce a kinematic objective functional defined as  the mean value of the Hilbert--Schmidt distance between the actions of $\Phi$ and $\Phi_D$ on the density matrices from the set $\mathcal{S}$:
\begin{eqnarray}\label{kinematic_func_st}
J_{\Phi_{D}}^{\mathcal{S}}(\Phi)=\frac{1}{d}\sum_{j=1}^d\|\Phi\rho_j - \Phi_{D}\rho_j\|^2.
\end{eqnarray}
The corresponding dynamical objective functional is:
\begin{equation}\label{states_norm}
    F_{\Phi_D}^{\mathcal{S}}(u,n)=J_{\Phi_D}^{\mathcal{S}}(\Phi_T^{u, n}) = \frac{1}{d}\sum_{j=1}^d\|\Phi_T^{u, n}\rho_j - \Phi_{D}\rho_j\|^2.
\end{equation}

\section{Gradient-Based Optimization Method}
\label{sec_inGRAPE}

For finding coherent and incoherent controls which generate a target quantum channel, we employ a first-order gradient based inGRAPE method \cite{PetruhanovPechenJPA2023}, developed for incoherent control and extending the GRAPE method \cite{Khaneja_Reiss_Kehlet_2005} to the general setting of coherent and incoherent control of open quantum systems. 

Computing the gradient of the objective functional in the Mayer problem of quantum channel optimization requires evaluating the variation of either the final system state $\rho_T$ or the matrix representation of the quantum channel $\Psi_T$. Therefore, we seek control functions in the form of simple piecewise-constant functions
\begin{eqnarray*}
    u(t)&=\sum_{j=1}^Nu_j\chi_{[t_{j-1};t_j)}, \quad u_j\in \mathbb{R},\\
    w(t)&=\sum_{j=1}^Nw_j\chi_{[t_{j-1};t_j)}, \quad w_j\in \mathbb{R},\\
    n(t)&=w^2(t)=\sum_{j=1}^Nw^2_j\chi_{[t_{j-1};t_j)}=\sum_{j=1}^Nn_j\chi_{[t_{j-1};t_j)}, \quad n_j\in \mathbb{R}^+,
\end{eqnarray*}
where $\chi_{[t_{j-1};t_j)}$ is the indicator function on the half-open interval ${[t_{j-1};t_j)}$, with $0<t_1<\dots<t_N=T$. The representation $n(t)=w^2(t)$ ensures the non-negativity condition $n(t) \ge 0$ automatically.

Here we introduce the dynamic functional $\widetilde{F}(v)=F(u(t),n(t))$, which depends on the vector $v=(u,w)^T=(u_1,\dots,u_N, w_1, \dots,w_N)^T \in \mathbb{R}^{2N}$ composed of the control values. At each iteration, the value of  the objective functionals is computed. If it is smaller than the value from the previous iteration, the new control is updated as:
$$v^{j+1}=v^j-h^j\mathrm{grad}_{v}\widetilde{F}(v).$$
Here $\mathrm{grad}_{v}\widetilde{F}(v)$ is a gradient of the objective functional.
The gradient factor $h^j$ is dynamic. In case of success, it is multiplied by some $a>1$; in case of failure, by some $b\in(0,1)$. The stopping criterion is either achieving an objective functional value $\widetilde{F}(v) < \epsilon$ or the absence of improvement over a maximum of $K_{\rm max}$ iterations. The exact expressions for the gradient of the objective functional are given in \ref{AppendixGradient}.

\section{Numerical experiments}
\label{Sec_numerical}

We consider generation of various rank-two qubit channels by coherent and incoherent controls  using the inGRAPE method with the objective functionals~(\ref{dynamic_func_Fr}) and~(\ref{states_norm}). For each channel generated using the state set functional~(\ref{states_norm}), the distance between the target and the generated channel is verified using the squared Frobenius norm-based functional~(\ref{dynamic_func_Fr}). On this basis, it is determined whether the channel has been successfully generated. We consider a quantum channel being successfully generated if its squared Frobenius norm distance from the target channel is less or equal to $10^{-2}$. In addition, generation in the absence of incoherent control has been also studied. In this mode, the initial incoherent control is set to zero everywhere and gradient with respect to incoherent control is not used for constructing controls for each iteration. Additionally, the quantum control landscape has been examined for the possible presence of traps, which are defined as local but not global minima of the landscape.

For the numerical experiments, the parameters of the master equation~(\ref{master_equation}) are chosen as $\omega = 1$, $\gamma = 0.01$, $\mu = 0.1$ to reflect a
typical relation that $\omega\gg \mu \gg \gamma$. Similar parameters were chosen in  \cite{PetruhanovPhotonics2023,PetruhanovPechenJPA2023}. The threshold stopping parameter of the optimization algorithm is $\epsilon=10^{-7}$, the maximal number $K_{\rm max}=20$, and the gradient multiplication parameters $a=1.1$, $b=0.5$ are used.

\subsection{Generation of the Replacement Channel}
Here we consider generation of a replacement quantum channel $\Phi_{\rho_D}$. 
If $\rho_D$ is a mixed quantum state, it can be represented as $\rho = p |\psi\rangle\langle\psi|+(1-p)|\psi^{\perp}\rangle\langle\psi^\perp|$, with $p\in (0,1)$. In this case, the Kraus representation of $\Phi_\rho$ is given by
$$
\Phi_\rho(\omega)=\sum_{k=1}^2K_{1k}\omega K_{1k}^\dagger+\sum_{k=1}^2K_{2k}\omega K_{2k}^\dagger,
$$
where the Kraus operators are $K_{1k} = \sqrt{p}|\psi\rangle\langle c_k|$ and $K_{2k} = \sqrt{1-p}  |\psi_\perp\rangle\langle c_k|$, $k=1,2$. If the target state $\rho_D$ is pure, i.e., $\rho_D = |\psi\rangle\langle\psi|$, the Kraus representation of $\Phi_{\rho_D}$ is $\Phi_{\rho_D}(\omega) = \sum_{k=1}^2 K_k \omega K_k^\dagger$, where the Kraus operators are $K_{k} = |\psi\rangle\langle c_k|$, $k=1,2$, and $\{|c_1\rangle, |c_2\rangle \}$ is an orthonormal basis in the Hilbert space of the system. The Kraus rank of the replacement quantum channel $\Phi_{\rho}$ is two if the target state $\rho_D$ is pure, and four if it is mixed.  If $\mathbf{r}_D$ is the Bloch vector of the target state $\rho_D$, then the matrix representation in the standard basis  of the replacement channel $\Phi_{\rho_D}$ takes the form:
\begin{equation*}
\mathbb{T}_{\Phi_{\rho_D}}=\left(\begin{array}{cc}
    1 &  \textbf{0}_{1\times3} \\
    \mathbf{r}_D & \textbf{0}_{3\times3}\end{array}\right).
\end{equation*}

\begin{figure}[t]
    \centering
        \includegraphics[width=\textwidth]{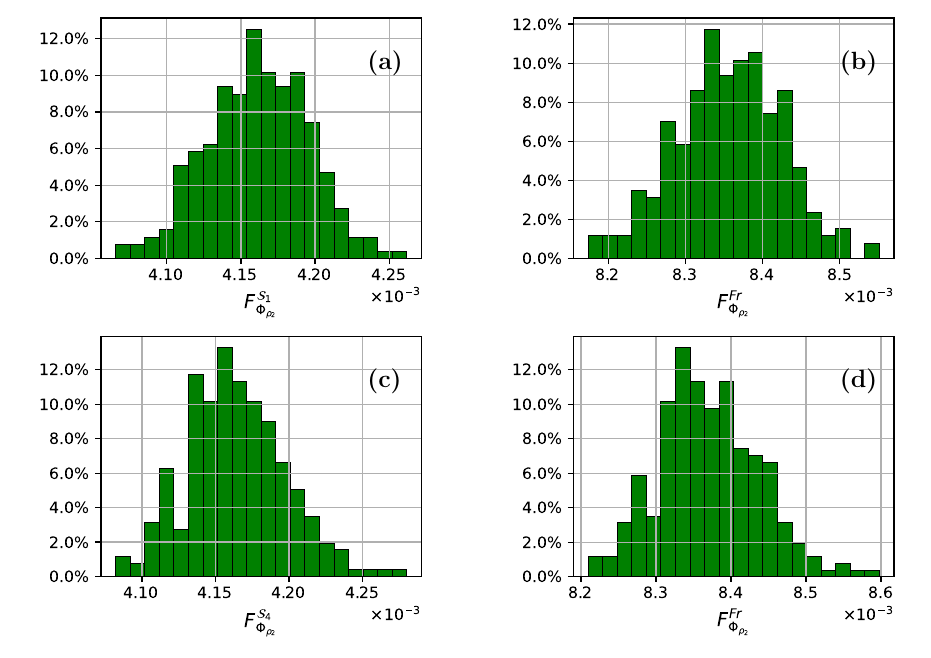}
    \caption{Histograms showing the distribution of final values of the objective functionals over 256 optimization runs for generation of the replacement channel $\Phi_{\bm{\rho}_2}$ using coherent and incoherent controls. The target pure state is $\bm{\rho}_2= |i\rangle\langle i|$ with Bloch vector $(0,1,0)^\top$.
    Histogram~(a) shows the optimization results for the functional $F^{\mathcal{S}_1}_{\Phi_{\bm{\rho}_2}}$ constructed with the set $\mathcal{S}_1$ consisting of a single chaotic state. Histogram~(c) corresponds to $F^{\mathcal{S}_4}_{\Phi_{\bm{\rho}_2}}$ constructed using the set $\mathcal{S}_4$ of four linearly independent density matrices.
    Right column: squared Frobenius distance from the obtained channel to the target channel $\Phi_{\bm{\rho}_2}$, evaluated via the functional $F^{\mathrm{Fr}}_{\Phi_{\bm{\rho}_2}}$ for  generation using the  functional $F^{\mathcal{S}_1}_{\Phi_{\bm{\rho}_2}}$~(b) and using the  functional $F^{\mathcal{S}_4}_{\Phi_{\bm{\rho}_2}}$~(d).}
\label{fig1:At1_hist}
\end{figure}

For the model under consideration, generation of replacement channels with a given accuracy is theoretically possible. Note that in the absence of control (i.e., when $u = n = 0$), the solution $\Phi_t^{0,0}$ of equation~(\ref{master_equation}) converges as $t\to\infty$ to the replacement channel $\Phi_{\bm{\rho}_4}$. State $\bm{\rho}_4=|0\rangle\langle 0|$ corresponds to the vector $(0,0,1)^T$ on the Bloch sphere which is the North Pole.
According to~\cite{Lokutsievskiy_Pechen_2021}, the reachable set from $\bm{\rho}_4$ under the master equation~(\ref{master_equation}) covers, with coherent control only, the entire Bloch ball with precision on the order of $\gamma/\omega$. More precisely, the closure of the reachable set spans the entire Bloch ball except for two lacunae of size on the order of $\gamma/\omega$; consequently, all replacement channels $\Phi_\rho$ can be approximately realized, with the only possible exceptions being those for which $\rho$ corresponds to a point lying within one of these lacunae. In our numerical experiments $\gamma/\omega = 0.01$, so the size of the lacunae is too small to affect the results of the generation of the replacement channels. 

\begin{figure}[t]
    \centering
        \includegraphics[width=\textwidth]{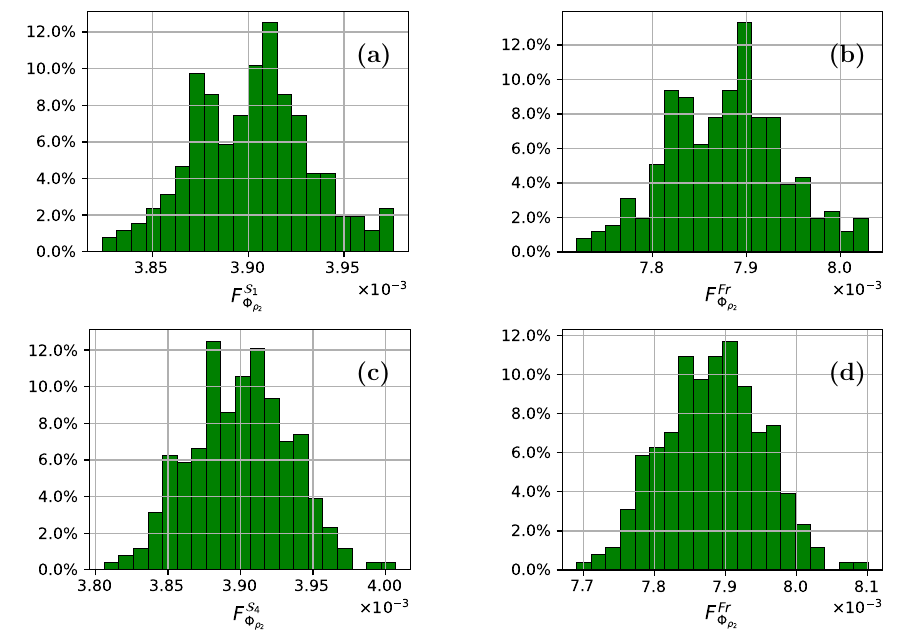}
    \caption{Histograms showing the distribution of final values of the objective functionals over 256 optimization runs for the generation of the replacement channel $\Phi_{\bm{\rho}_2}$ using only coherent control. Incoherent control is constant and equals zero. The target pure state is $\bm{\rho}_2= |i\rangle\langle i|$.    Histogram~(a) shows the optimization results obtained using the functional $F^{\mathcal{S}_1}_{\Phi_{\bm{\rho}_2}}$, which is generated by the set $\mathcal{S}_1$  consisting of a single chaotic state. Histogram~(c) corresponds to $F^{\mathcal{S}_4}_{\Phi_{\bm{\rho}_2}}$, generated by a set $\mathcal{S}_4$.
    Right column: squared Frobenius distance from the obtained channel to the target channel $\Phi_{\bm{\rho}_2}$, evaluated via the functional $F^{\mathrm{Fr}}_{\Phi_{\bm{\rho}_2}}$, for the generation using the  functional $F^{\mathcal{S}_1}_{\Phi_{\bm{\rho}_2}}$~(b) and using the  functional $F^{\mathcal{S}_4}_{\Phi_{\bm{\rho}_2}}$~(d).}
    \label{fig:At1_const_hist}
\end{figure}

To generate a replacement channel, a large evolution time is required; in the numerical experiments, the final time is chosen as $T=1500$ to allow any state sufficient time to purify under the free Hamiltonian. The control is discretized into $N=3000$ intervals. The inGRAPE method based on the objective functionals $F^{\mathcal{S}}_{\Phi_{\rho_0}}$, $\mathcal{S} \in \{\mathcal{S}_1, \mathcal{S}_4\}$, is used to find the controls. The set  $\mathcal{S}_1$ consists of the single chaotic state $\bm{\rho}_3=\frac{1}{2}\mathbb{I}$. The  set $\mathcal{S}_4$ consists of  four linearly independent pure states:\begin{eqnarray}\label{4_pure_state}
    \bm{\rho}_1=|+\rangle\langle+|=\frac{1}{2}\begin{pmatrixTWO}
        1 & 1 \\
        1 & 1
    \end{pmatrixTWO}, \quad \bm{\rho}_2=|i\rangle\langle i|=\frac{1}{2}\begin{pmatrixTWO}
        1 & -i \\
        i & 1
    \end{pmatrixTWO},\\
    \bm{\rho}_4=|0\rangle\langle0|=\begin{pmatrixTWO}
        1 & 0 \\
        0 & 0
    \end{pmatrixTWO}, \quad \bm{\rho}_5=|1\rangle\langle1|=\begin{pmatrixTWO}
        0 & 0 \\
        0 & 1
    \end{pmatrixTWO}\nonumber.
\end{eqnarray}

Under the action of the found controls, any state in this model first passes through the North Pole and then spirally approaches the target state with possible decoherence, as shown on Fig.~(\ref{fig:At1_wk}). This behavior makes the problem analogous to quantum state generation from the initial state being the North Pole. A direct consequence of this transit dynamics is that channels targeting states in the vicinity of the North Pole are achievable with greater accuracy at an order-of-magnitude coarser discretization.

\begin{figure}[t]
    \centering
        \includegraphics[width=\textwidth]{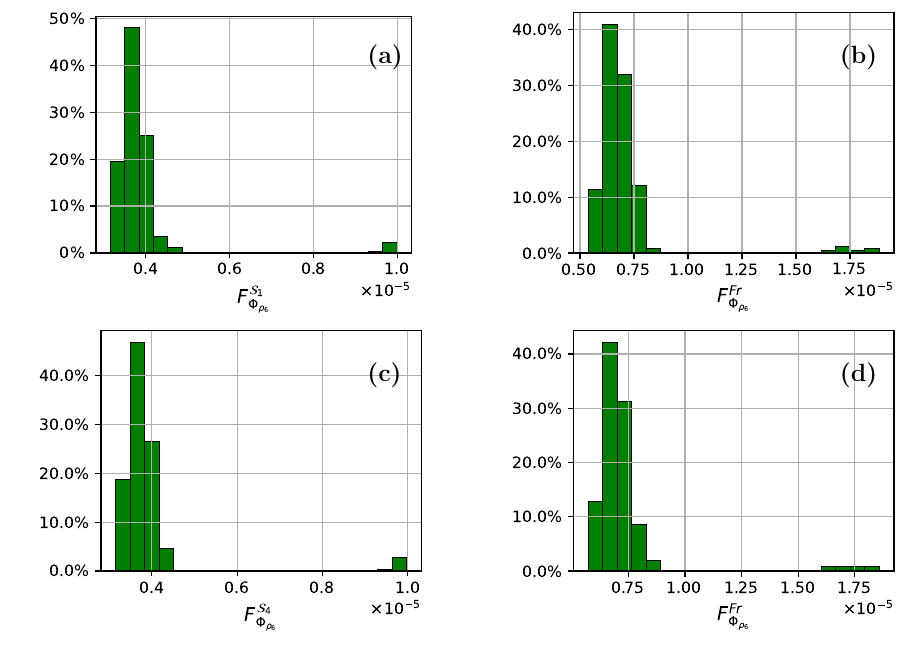}
    \caption{Histograms showing the distribution of final values of the objective functionals over 256 optimization runs for the generation of the replacement channel $\Phi_{\bm{\rho}_6}$ using coherent and incoherent control. The target state is mixed $\bm{\rho}_6= 3/4|i\rangle\langle i|+1/4|-i\rangle\langle -i|$ with Bloch vector $(0,0.5,0)^\top$.
    Histogram~(a) shows the optimization results obtained using the functional $F^{\mathcal{S}_1}_{\Phi_{\bm{\rho}_2}}$, which is generated by the set $\mathcal{S}_1$. Histogram~(c) corresponds to $F^{\mathcal{S}_4}_{\Phi_{\bm{\rho}_2}}$, generated by a set $\mathcal{S}_4$.
    Right column: squared Frobenius distance from the obtained channel to the target channel $\Phi_{\bm{\rho}_2}$, evaluated via the functional $F^{\mathrm{Fr}}_{\Phi_{\bm{\rho}_2}}$, for the generation using the  functional $F^{\mathcal{S}_1}_{\Phi_{\bm{\rho}_2}}$~(b) and using the  functional $F^{\mathcal{S}_4}_{\Phi_{\bm{\rho}_2}}$~(d).}
    \label{fig:At1_hist_ch}
\end{figure}

We investigate channels with various target states. For an illustration, the channel $\Phi_{\bm{\rho}_2}$ targeting $\bm{\rho}_2 = |i\rangle\langle i|$
(with the Bloch vector $=(0, 1, 0)^T$) was selected, as it is sufficiently distant from the North Pole to present a non-trivial generation problem. This quantum channel was generated via objective functional based on sets $\mathcal{S}_1$, $\mathcal{S}_4$ both in the presence and absence of incoherent control Fig.~(\ref{fig:At1_hist}) and Fig.~(\ref{fig:At1_const_hist}) correspondingly; the results are consistent across all functionals and insensitive to whether incoherent control is included or not. Furthermore, using the functional defined on a single chaotic state $\mathcal{S}_1$, a replacement channel $\Phi_{\bm{\rho}_6}$ targeting the mixed state $\bm{\rho}_6=\frac{3}{4}|i\rangle\langle i|+\frac{1}{4}|-i\rangle\langle -i|$ (with the Bloch vector $=(0, 0.5, 0)^T$) was successfully generated, as shown on Fig.~(\ref{fig:At1_hist_ch}). The results for this mixed-state target are better than for the pure-state target, owing to the closer proximity of $\bm{\rho}_6$ to the North Pole. Notably, despite the Kraus rank of this channel being four, generation has been achieved using a single state from the set. The incoherent control has virtually no effect on the generation of this process. Its values remain within the hypercube, while exhibiting a tendency to decrease during the motion from the North Pole toward the target state.

\begin{figure}[t]
    \centering
        \includegraphics[width=\textwidth]{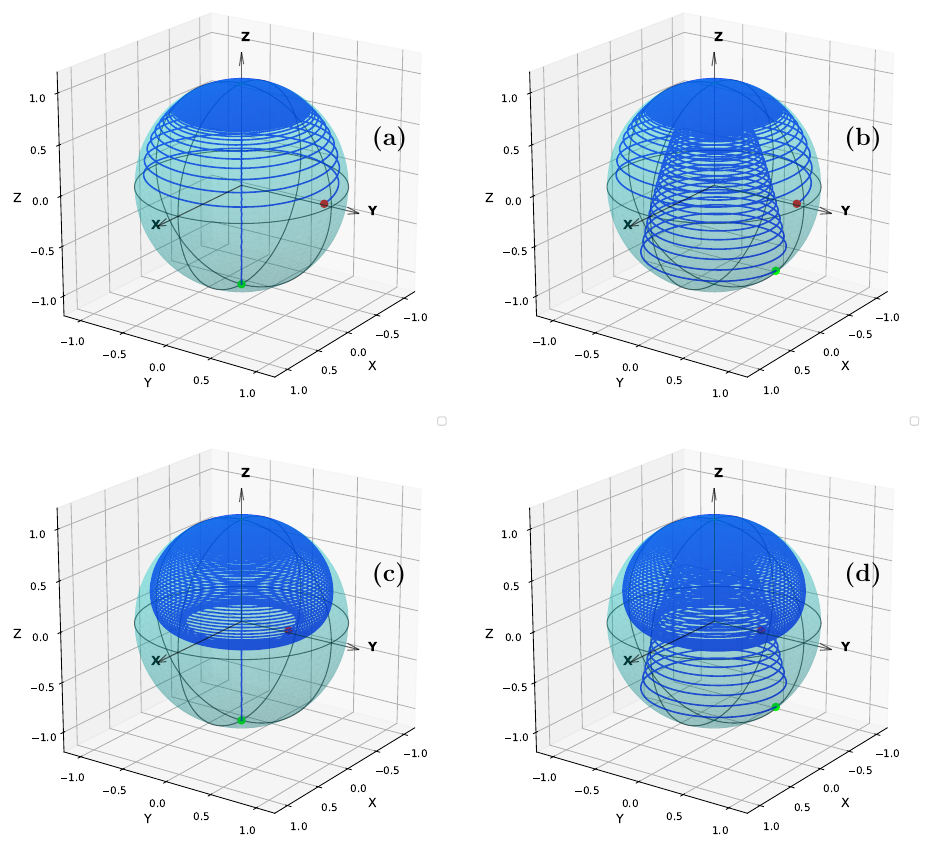}

    \caption{In the top row, the dynamics under one of the obtained controls are shown for generation using the objective functional $F^{\mathcal{S}_1}_{\Phi_{\rho_4}}$. In the bottom row, the dynamics under one of the obtained controls are shown for optimization using the objective functional $F^{\mathcal{S}_1}_{\Phi_{\rho_6}}$. Subplots (a, c) correspond to the initial state $r_0=(0,0,-1)^T$; subplots (b, d) correspond to $r_0=(0, 0.7, -0.7)^T$. The initial state is marked by green point and the final state by red point. All trajectories pass through the North Pole.}
    \label{fig:At1_wk}
\end{figure}

\FloatBarrier
\subsection{Phase Damping Channel}

The phase damping channel $\Phi_{\mathrm{damp}}(\omega)=K_0\omega K_0^\dagger+K_1\omega K_1^\dagger$ is defined by the following Kraus operators~\cite{IngermanZyczkowski}
\begin{equation*}
        K_0 = \begin{pmatrixTWO}
            1 & 0\\
            0 & \sqrt{1-p}
        \end{pmatrixTWO}, \ K_1 = \begin{pmatrixTWO}
            0 & 0\\
            0 & \sqrt{p}
        \end{pmatrixTWO}, \quad p\in(0,1].
\end{equation*}
The phase damping channel models the process of information loss due to the dephasing action of the environment, while the energy of the system remains unchanged. Under the action of this channel, for any nonzero $p$, the purity of the system strictly decreases, with lower purity corresponding to larger values of $p$. The matrix representation of this channel is $$\mathbb{T}_{\Phi_{\mathrm{damp}}}=\mathrm{diag}\left(1,\sqrt{1-p},\sqrt{1-p}, 1\right).$$
The phase damping channel admits the decomposition (\ref{rotation}) with 
\begin{equation*}
    U=e^{\frac{-i \pi \sigma_x}{4}}, V=e^{\frac{i \pi \sigma_x}{4}}, u = \arccos{\sqrt{1-p}}, v = 0.
\end{equation*}
By Proposition~\ref{unitinvar} and Theorem~\ref{rank2} it is uniquely determined by the set of states $\mathcal{S}^{'}_{3}= \{ V^\dagger \rho V | \rho \in \mathcal{S}_3\}$.
By direct computation, we obtain
$$\mathcal{S}^{'}_{3} = \{\bm{\rho}_1, \bm{\rho}_3, \bm{\rho}_4 \}.$$

\begin{figure}[t]
    \centering
        \includegraphics[width=\textwidth]{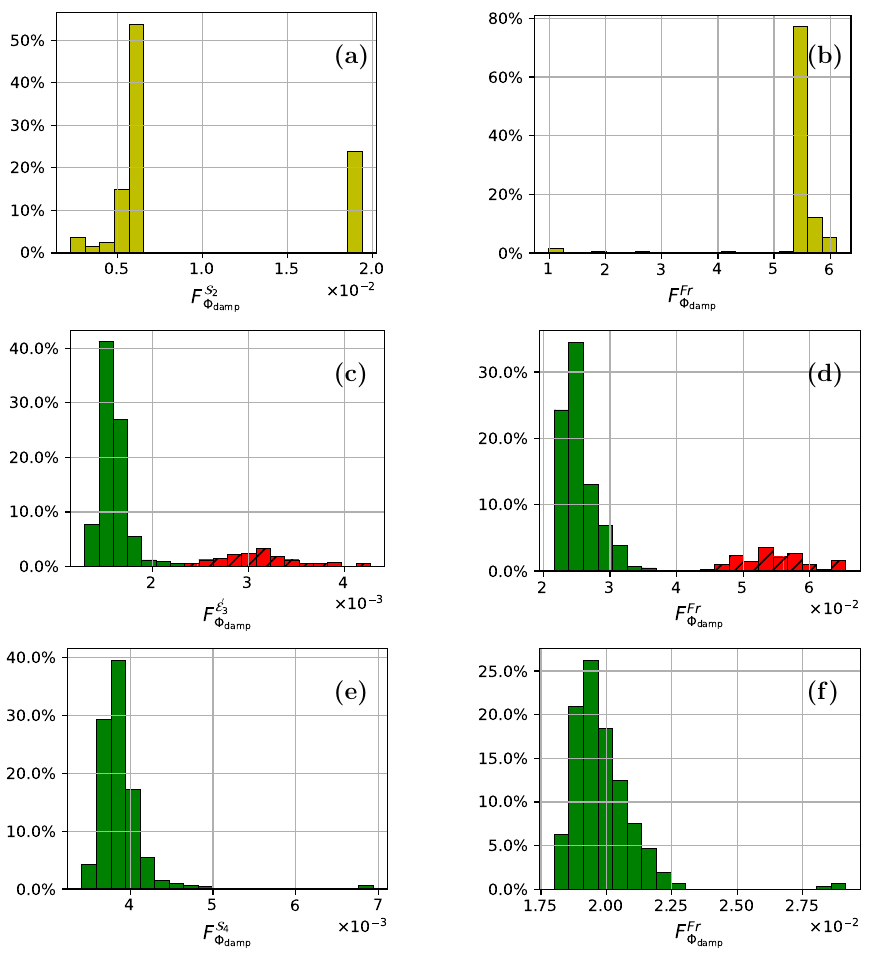}

    \caption{Histograms showing the distribution of final values of the objective functionals over 420 optimization runs for the generation of the phase damping channel $\Phi_{\mathrm{damp}}$ using coherent and incoherent control. Histogram~(a, c, e) shows the optimization results obtained using the functional $F^{\mathcal{S}}_{\Phi_{\mathrm{damp}}}$, which is generated by the sets $\mathcal{S}_2, \mathcal{S}_3^{'}, \mathcal{S}_4$  correspondingly.
    Histograms~(b, d, f) squared Frobenius distance from the obtained channel to the target channel $\Phi_{\mathrm{damp}}$, evaluated via the functional $F^{\mathrm{Fr}}_{\Phi_{\mathrm{damp}}}$, for the generation using the functional $F^{\mathcal{S}}_{\Phi_{\mathrm{damp}}}$~(\ref{dynamic_func_Fr}) with sets $\mathcal{S}_2, \mathcal{S}^{'}_3, \mathcal{S}_4$  correspondingly}
    \label{fig:ph_damping_hist}
\end{figure}

The channel is generated for the parameter $p=0.3$. For this value of the parameter $p$, the target channel is close to the identity channel which is unitary. Since the phase damping channel is determined by $\mathcal{E}^{'}_3$, the objective functional $F^{\mathcal{E}^{'}_3}_{\Phi_{\mathrm{damp}}}$ is sufficient for channel generation and it is confirmed numerically in our experiments. Furthermore, for comparison, this quantum channel was generated using the objective functional $F^{\mathcal{S}}_{\Phi_{\mathrm{damp}}}$ with an set of two states $\mathcal{S}_2=\{\bm{\rho}_3, \bm{\rho}_4\}$ and four states~$\mathcal{S}_4$~(\ref{4_pure_state}) (full basis of states), as well as the squared Frobenius norm-based functional $F^{\rm Fr}_{\Phi_{\mathrm{damp}}}$. The time and control discretization parameters were chosen as $T=15$, $K=50$. As can be seen from Fig.~(\ref{fig:ph_damping_hist}), the channel was generated using the functional~$F^{\mathcal{S}_3}_{\Phi_{\mathrm{damp}}}$ with the same accuracy as with the functional~$F^{\mathcal{S}_4}_{\Phi_{\mathrm{damp}}}$.

\begin{figure}[t]
    \centering
        \includegraphics[width=\textwidth]{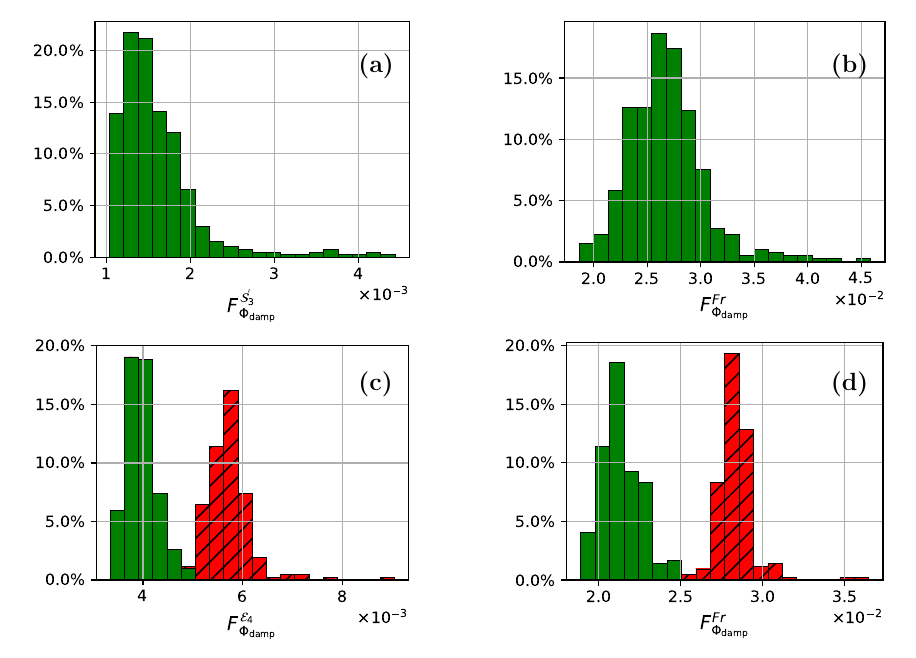}

    \caption{Histograms showing the distribution of final values of the objective functionals over 420 optimization runs for the generation of the phase damping channel $\Phi_{\mathrm{damp}}$ using only coherent control. Incoherent control is constant and equals zero. Histogram~(a, c) shows the optimization results obtained using the functional $F^{\mathcal{S}}_{\Phi_{\mathrm{damp}}}$, which is generated by the sets $\mathcal{S}_3^{'}, \mathcal{S}_4$  correspondingly.
    Histograms~(b, d) squared Frobenius distance from the obtained channel to the target channel $\Phi_{\mathrm{damp}}$, evaluated via the functional $F^{\mathrm{Fr}}_{\Phi_{\mathrm{damp}}}$, for the generation using the functional $F^{\mathcal{S}}_{\Phi_{\mathrm{damp}}}$ with sets $\mathcal{S}^{'}_3, \mathcal{S}_4$  correspondingly}
    \label{fig:ph_damping_hist_const}
\end{figure}

However, the objective functional~$F^{\mathcal{S}_2}_{\Phi_{\mathrm{damp}}}$, while yielding comparable final values~($10^{-3}$), fails to generate the channel. The distance to the target channel in the squared Frobenius norm is several orders larger than in the case of generation using the functionals~$F^{\mathcal{S}_3}_{\Phi_{\mathrm{damp}}}, F^{\mathcal{S}_4}_{\Phi_{\mathrm{damp}}}, F^{\rm Fr}_{\Phi_{\mathrm{damp}}}$. When the channel is successfully generated, the resulting controls split into two patterns, one of which corresponds to the peaks in~(\ref{fig:ph_damping_hist},~\ref{fig:ph_damping_hist_const}) with lower values of the objective functional (a possible trap). The optimization runs corresponding to this pattern are marked in green in Fig.~\ref{fig:ph_damping_ctrl}. The other pattern, marked in red, corresponds to the peak with higher final values of the objective functional and larger distance from the target channel to the generated one. It is notable that the channel is generated using the objective functional $F_{\Phi_{\mathrm{damp}}}^{\mathcal{S}^{'}_{3}}$. Moreover, in the absence of optimization over the incoherent control, the successful runs correspond only to the better, red peak. The same behavior is observed when optimizing $F_{\Phi_{\mathrm{damp}}}^{\mathcal{S}_{4}}$ using both coherent and incoherent controls. When generating with the objective functional $F_{\Phi_{\mathrm{damp}}}^{\mathcal{S}_{4}}$ in the absence of incoherent control optimization, the controls fall into one of the two patterns; the same occurs when generating with both coherent and incoherent controls using the functional $F_{\Phi_{\mathrm{damp}}}^{\mathcal{S}^{'}_{3}}$.

\begin{figure}[t]
    \centering
        \includegraphics[width=\textwidth]{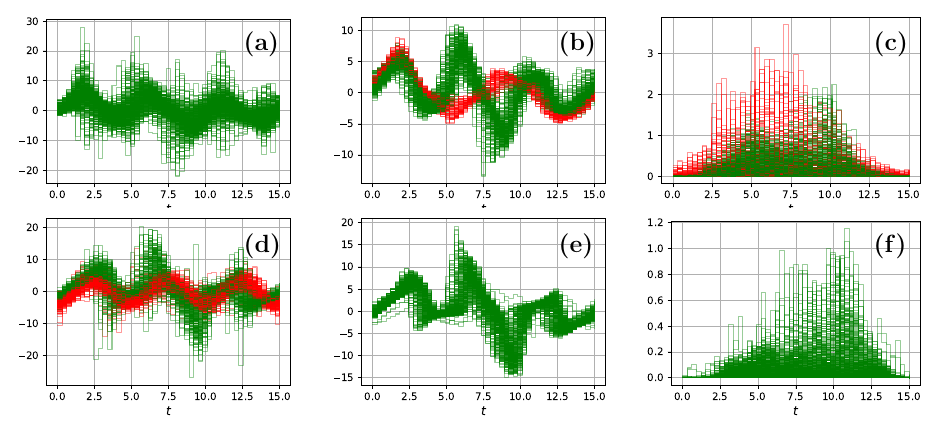}
    \caption{Control patterns obtained during the generation of the channel $\Phi_{\mathrm{damp}}$. Coherent controls (a, d) were generated using the functional $F_{\Phi_{\mathrm{damp}}}^{\mathcal{S}}$, where $\mathcal{S} = \mathcal{S}^{'}_3, \mathcal{S}_4$ respectively, in the absence of incoherent control. Coherent (b, e) and incoherent (c, f) controls were generated using the functional $F_{\Phi_{\mathrm{damp}}}^{\mathcal{S}}$, where $\mathcal{S} = \mathcal{S}^{'}_3, \mathcal{S}_4$ respectively, with optimization over both coherent and incoherent controls.}
    \label{fig:ph_damping_ctrl}
\end{figure}

\begin{figure}[t]
    \centering
        \includegraphics[width=\textwidth]{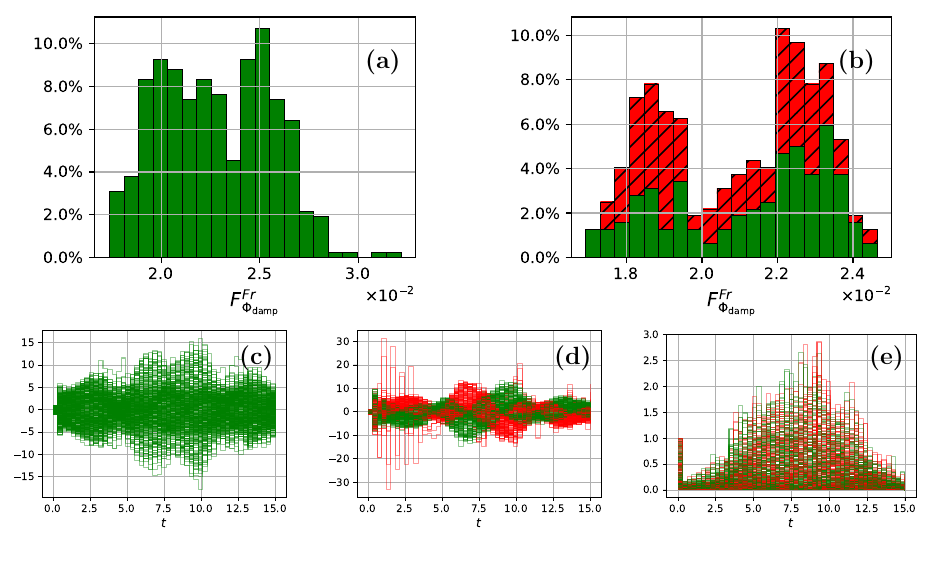}

    \caption{Histograms showing the distribution of final values of the objective functionals over 420 optimization runs for the generation of the phase damping channel $\Phi_{\mathrm{damp}}$ using only coherent (a) and both coherent and incoherent (b) control using the functional $F^{\rm Fr}_{\Phi_{\mathrm{damp}}}$. Subplots (c, d) show the coherent controls obtained by optimization using only coherent control and using both coherent and incoherent controls, respectively, while subplot (e) shows the incoherent controls.}
    \label{fig:ph_damping_hist_fr}
\end{figure}

\section{Conclusions and Discussion}
\label{Conclusions}
In this work, we consider the problem of controlled generation of a target quantum evolution (defined as a quantum channel) in an open quantum system. It is important that we consider in general non-unitary target quantum channels, and concentrate in more details on rank two single-qubit quantum channels as the simplest beyond unitary. For them, we study how many quantum states are sufficient to determine whether a general single-qubit quantum channel, representing some actual evolution of an open qubit, coincides with a rank two single-qubit quantum channel or not. For a unique determination of a generic single qubit quantum channel one has to compare its action on four density matrices. However, we prove that for determining if a generic single-qubit quantum channel coincides with a rank two quantum channel, comparing its action on only three special density matrices is always sufficient. 

This result is important for defining minimal in the sense of the number of density matrices objective functional for optimal generation of rank two single qubit quantum channels. Finding the minimal number of states necessary to uniquely determine a quantum channel is even more important in experimental settings to minimize the cost of overall quantum channel tomography, which is cost of tomography of one quantum state multiplied by the number of quantum states.

Thereby we use this result for constructing  objective functional to numerically investigate generation of single-qubit rank two quantum channels in physical models of qubit driven by coherent and incoherent controls. We consider as target quantum channels various replacement and phase damping channels. For optimization, we use inGRAPE method based on the explicit form of gradient for the objective functional constructed using the three special density matrices. The obtained minimal infidelities are different for different channels. For phase damping channel and replacement channels projecting on some target states, minimal found infidelity for minimizing the objective functional based on three states is about $10^{-3}$, while for replacement channels projecting on some other target states minimal infidelity is as small as $10^{-5}$. This difference might be explained by the possible lack of controllability of the considered open qubit in the set of all quantum channels. Such an interesting and important problem of controllability of open quantum systems in the set of quantum channels (including non-unitary) still requires its investigation. Establishing controllability of open quantum systems in the set of quantum channel, and extending the obtained results beyond two-level systems and beyond rank two quantum channels, are interesting directions for a future research.

\section*{Data availability statement}
All data that support the findings of this study are included within the article (and any supplementary
files).

\ack This work was funded by the Ministry of Science and Higher Education of the Russian Federation (grant number 075-15-2020-788).

\appendix
\section{Parametrization}
\label{Parametrization}
We consider the parametrization of the quantum state  in terms of the Bloch vector given by~(\ref{Bloch_parametrization}). In this parametrization, the master equation~(\ref{master_equation}) takes the form
\begin{equation}\label{master_equation_r}
\fl \qquad    \frac{\mathrm{d} \mathbf{r}}{\mathrm{d} t} = A(u(t), n(t))\mathbf{r}  + \mathbf{b}  = (B + B_uu(t)+B_nn(t))\mathbf{r}  + \mathbf{b}, \quad \mathbf{r}(0)=\mathbf{r}_0.
\end{equation}
Here
\begin{eqnarray*}
   B=\left(\begin{array}{ccc}
        -\frac{\gamma}{2} & \omega & 0 \\
        -\omega & -\frac{\gamma}{2} & 0 \\
        0 & 0 & -\gamma
    \end{array}\right), \quad B_u=\left(\begin{array}{ccc}
        0 & 0 & 0 \\
        0 & 0 & -2\mu \\
        0 & 2\mu & 0
    \end{array}\right),
    \\
    B_n=\left(\begin{array}{ccc}
        -\gamma & 0 & 0 \\
        0 & -\gamma & 0 \\
        0 & 0 & -2\gamma
    \end{array}\right), \quad \mathbf{b}=\left(\begin{array}{c}
        0\\
        0\\
        \gamma
    \end{array}\right).
\end{eqnarray*}

As mentioned in  Section~\ref{sec_inGRAPE}, the control is specified as a piecewise-constant function associated with vectors $v=(u, w)\in\mathbb{R}^{2N}$. So we denote the value of the operator $A(u(t), n(t))$ on the interval $[t_{k-1}, t_k)$ as $A_k$. Since the control is piecewise-constant, we can obtain the solution to equation~(\ref{master_equation_r}) as a product of matrix exponentials:
\begin{eqnarray}\label{r_solution}
    \mathbf{g}_k=(e^{A_k \Delta t}-\mathbb{I})A_k^{-1}\mathbf{b},\\
	\mathbf{r}(t_k)\equiv\mathbf{r}_k=e^{A_k \Delta t}\mathbf{r}_{k-1}+\mathbf{g}_k,  \quad \mathbf{r}(0)=\mathbf{r}_0	\quad k=1,\ldots N\nonumber.
\end{eqnarray}
We denote $\mathbf{r}_T(v)=\mathbf{r}_N$.

Let $\Phi_D$ be the target channel and let $\mathcal{S}=\{\rho_j\}_{j=1}^d$ be a set of density matrices that uniquely determines this  channel.  For each $j \in \{1,\dots,d\}$, let $\mathbf{r}_j$ denote the Bloch vector corresponding to the state $\rho_j \in \mathcal{S}$, let $\mathbf{r}_j^T(v)$ be the solution to equation~(\ref{r_solution}) at time $T$ with initial condition $\mathbf{r}_j$, and let $\mathbf{r}_j^D$ denote the Bloch vector of the state $\rho^D_j = \Phi_D \rho_j$. In this parametrization, the objective functional $F^{\mathcal{S}}_{\Phi_D}$~(\ref{states_norm}) takes the form
\begin{equation}\label{dynamic_func_st_r}
    F^{\mathcal{S}}_{\Phi_D}(v)= \frac{1}{2d}\sum_{j=1}^d\|\mathbf{r}^j_T(v) - \mathbf{r}^j_{D}\|^2.
\end{equation}

To obtain the  functional $F^{\rm Fr}_{\Phi_D}$~(\ref{dynamic_func_Fr}), it is necessary to parametrize equation~(\ref{master_equation_Phi}). We introduce $\mathbf{q}=(1,\mathbf{r})^T$ to bring the master equation~(\ref{master_equation_r}) into homogeneous form. It then reads

\begin{equation}\label{master_equation_q}
\fl\qquad\frac{\mathrm{d} \mathbf{q}}{\mathrm{d}t} = C(u(t), n(t))\mathbf{q} = \left(\begin{array}{cc}
0 & \mathbf{0}_{1\times3} \\
\mathbf{b} & A(u(t), n(t))\end{array}\right)\mathbf{q},\quad\mathbf{q}(0)=\mathbf{q}_0=(1,\mathbf{r}_0)^T.
\end{equation}
For the quantum channel $\Phi^{u,n}_t$ denote $\Psi(t)=\mathbb{T}_{\Phi^{u,n}_t}$.  Then
\begin{equation*}
\mathbf{q}(t) = \Psi(t)  \mathbf{q}(0),
\end{equation*}

\begin{equation}\label{master_equation_psi}
    \frac{\mathrm{d} \Psi}{\mathrm{d}t} = C(u(t), n(t))\Psi, \quad \Psi(0) = \mathbb{I}.
\end{equation}
  Since we consider   piecewise-constant   controls, the solution can be written as:
\begin{equation}\label{Psi_solution}
     \Psi_k = e^{C_k\Delta t} \Psi_{k-1}, \quad \Psi_0=
    \mathbb{I}, \quad k=1,\ldots N.
\end{equation}
Here $
    C_k = \left(\begin{array}{cc}
0 & \mathbf{0}_{1\times3} \\
\mathbf{b} & A_k\end{array}\right)
$. We denote $\Psi_T(v)=\Psi_N$.

We can obtain the explicit matrix form of the desired channel $\Phi_D$ to derive the explicit form of the functional $F^{\rm Fr}_{\Phi_D}$~(\ref{dynamic_func_Fr}) using its Kraus representation 
\begin{equation*}
    \Phi_D\rho = \sum_{m=1}^l
    K_m\rho K_m^\dagger=\sum_{m=1}^{l} K_m\left(\frac{1}{2}\sum_{j=0}^3\sigma_j q_j\right)K_m^\dagger,
\end{equation*}
where $\mathbf{q}=(q_0,q_1,q_2,q_3)^T=(1,\mathbf{r})^T$.
Multiplying successively by all Pauli matrices and taking the trace, we obtain
\begin{equation*}
\fl\qquad    \mathbf{q}_{(D)i}=\Tr\left(\sum_{m=1}^lK_m\left(\frac{1}{2}\sum_{j=0}^3\sigma_j \mathbf{q}_j\right)K_m^\dagger\sigma_i\right)=\sum_{j=0}^3\frac{1}{2}\Tr\left(\sum_{m=1}^{l}K_m\sigma_jK_m^\dagger\sigma_i\right)\mathbf{q}_j.
\end{equation*}
Then the matrix elements of the channel representation are given by:
\begin{equation}
\label{channel_parametrization}
    \left[\Psi_D\right]_{ij}=\frac{1}{2}\Tr\left(\sum_{m=1}^lK_m\sigma_jK_m^\dagger\sigma_i\right).
\end{equation}
The objective functional $\widetilde{F}^{\rm Fr}_{\Phi_D}$ (\ref{dynamic_func_Fr}) can be expressed  in the form 
\begin{equation}\label{dynamic_func_Fr_Psi}
    \widetilde{F}^{\rm Fr}_{\Phi_D}(v)=\|\Psi_D-\Psi_T(v)\|^2.
\end{equation}

\section{Analytical formulas for Gradient}
\label{AppendixGradient}

To employ gradient-based optimization methods, it is necessary to compute the gradient of the functional with respect to the vector formed by the values of the coherent and incoherent controls. We compute the gradients of the functionals $\widetilde{F}_{\Phi_D}^\mathcal{S}$ (\ref{dynamic_func_st_r}) and $\widetilde{F}_{\Phi_D}^{\rm Fr}$ (\ref{dynamic_func_Fr_Psi}) with respect to the vector $v=(u, w)^T$: 
\begin{equation}\label{gradient_F_St}
\fl\qquad
    \frac{\partial}{\partial v_k}\widetilde{F}^{\mathcal{S}}_{\Phi_D}(v) = \frac{\partial}{\partial v_k}  \frac{1}{2d}\sum_{j=1}^d\|\mathbf{r}^j_T(v) - \mathbf{r}^j_{D}\|^2  = \frac{1}{d}\sum_{j=1}^d\langle \mathbf{r}^j_T(v) - \mathbf{r}^j_{D}, \frac{\partial}{\partial v_k}\mathbf{r}^j_T(v)\rangle,
\end{equation}

\begin{equation}\label{gradient_F_Gr}
\fl\qquad
    \frac{\partial}{\partial v_k}\widetilde{F}^{\rm Fr}_{\Phi_D}(v) = \frac{\partial}{\partial v_k} \|\Psi_T(v) - \Psi_{D}\|^2  = 2\Tr\left((\Psi_T(v) - \Psi_{D})^\dagger \frac{\partial}{\partial v_k}\Psi_T(v)\right).
\end{equation}
These formulas were introduced in Theorem~3 of~\cite{PetruhanovPechenJPA2023}. To find the gradient $\frac{\partial}{\partial v_k}\mathbf{r}^i_T(v)$ in~(\ref{gradient_F_St}), it is necessary to differentiate the solution of equation~(\ref{r_solution}). For the derivative of the exponential of a non-commuting object, we  use the well-known integral formula
\begin{eqnarray*}
\fl\qquad
   \frac{\partial}{\partial u_k}\mathbf{r}_T(v)=\frac{\partial}{\partial u_k}\mathbf{r}_T(u, w) = e^{A_N \Delta t_N} \dots e^{A_{k + 1} \Delta t_{k + 1}}\left(\left[\frac{\partial}{\partial u_k}e^{A_k \Delta t_k}\right]\mathbf{r}_{k-1} +  \frac{\partial}{\partial u_k} {\mathbf{g}_k}\right),
\end{eqnarray*}
\begin{eqnarray*}
\fl\qquad
    \frac{\partial}{\partial w_k}\mathbf{r}_T(v)=  \frac{\partial}{\partial w_k}\mathbf{r}_T(u,w) = e^{A_N \Delta t_N} \dots e^{A_{k + 1} \Delta t_{k + 1}}\left(\left[\frac{\partial}{\partial w_k}e^{A_k \Delta t_k}\right]\mathbf{r}_{k-1} +  \frac{\partial}{\partial w_k} {\mathbf{g}_k}\right).
\end{eqnarray*}
Here
\begin{eqnarray*}
    \frac{\partial}{\partial u_k}e^{A_k \Delta t_k}=\Delta t_k\int \limits _0^1 e^{\alpha A_k \Delta t_k } B_u e^{(1-\alpha)A_k \Delta t_k} \rmd\alpha,\\
    \frac{\partial}{\partial w_k}e^{A_k \Delta t_k}=2\Delta t_kw_k\int \limits _0^1 e^{\alpha A_k \Delta t_k } B_n e^{(1-\alpha)A_k \Delta t_k} \rmd\alpha,\\
    \frac{\partial}{\partial u_k}\mathbf{g}_k = \left(\frac{\partial}{\partial u_k}e^{A_k \Delta t_k} - (e^{A_k \Delta t_k} - I)A_k^{-1} B^u\right) A_k^{-1} \mathbf{b},\\
    \frac{\partial}{\partial w_k}\mathbf{g}_k = \left(\frac{\partial}{\partial w_k}e^{A_k \Delta t_k} - 2w_k(e^{A_k \Delta t_k} - I)A_k^{-1} B^n\right) A_k^{-1} \mathbf{b}.
\end{eqnarray*}
To obtain the gradient in~(\ref{gradient_F_Gr}), it is necessary to differentiate~(\ref{Psi_solution}), which requires the use of the integral differentiation formula:
\begin{eqnarray*}
\fl\qquad
    \frac{\partial}{\partial u_k}\Psi_T(u, w) = e^{C_N\Delta t}\dots e^{C_{k+1}\Delta t} \left(\int_0^1 \Delta t\, e^{C_k\Delta t (1-\alpha)}\left(\begin{array}{cc}
    0 &  \mathbf{0}_{1\times3} \\
    \mathbf{0}_{3\times1} & B_u\end{array}\right)e^{C_k\alpha\Delta t } \rmd\alpha\right) \Psi_{k-1},\\
\fl\qquad
    \frac{\partial}{\partial w_k}\Psi_T(u, w) = e^{C_N\Delta t}\dots e^{C_{k+1}\Delta t} \left(\int_0^1 2\Delta t w_k\, e^{C_k\Delta t (1-\alpha)}\left(\begin{array}{cc}
    0 &  \mathbf{0}_{1\times3} \\
    \mathbf{0}_{3\times1} & B_n\end{array}\right)e^{C_k\alpha\Delta t } \rmd\alpha\right) \Psi_{k-1}.
\end{eqnarray*}

\end{document}